\newtheorem{proposition}{Proposition}
\begin{document}

\title{Power Allocation and Time-Domain Artificial Noise Design for Wiretap OFDM with {Discrete} Inputs}
\author{Haohao~Qin$^\S$,
        Yin~Sun$^\dag$,~\IEEEmembership{Member,~IEEE,}
        Tsung-Hui~Chang$^\ddag$,~\IEEEmembership{Member,~IEEE,}
        Xiang~Chen$^\S$,~\IEEEmembership{Member,~IEEE,}        
        Chong-Yung~Chi$^*$,~\IEEEmembership{Senior Member,~IEEE,}
        Ming~Zhao$^\S$,~\IEEEmembership{Member,~IEEE,}
        and Jing~Wang$^\S$,~\IEEEmembership{Member,~IEEE}
\thanks{Manuscript received May 12, 2012; revised August 18, 2012 and November 26, 2012; accepted March 7, 2013. The associate editor coordinating the review of this paper and approving it for publication was Sofi\`ene Affes.}
\thanks{This work is supported in part by National Basic Research Program of China (2012CB316002),
National S\&T Major Project (2011ZX03004-004), National Natural Science Foundation of China(61132002), Tsinghua Research
Funding-No.2010THZ02-3, National Science
Council, Taiwan, under Grant NSC99-2221-E007-052-MY3, and National Science Council, Taiwan, under Grant NSC101-2218-E-011-043, International S\&T Cooperation Program (2012DFG12010) and Ericsson Company.}
\thanks{$^\S$Haohao Qin, Xiang Chen (the corresponding author), Ming Zhao and Jing Wang are with State Key Laboratory on Microwave
and Digital Communications, Tsinghua National Laboratory for
Information Science and Technology, Department of Electronic
Engineering, Tsinghua University, BJ 100084, P.~R.~China (e-mail:
$\{$haohaoqin07, chenxiang98$\}$@gmail.com). Xiang Chen is also with
Aerospace center, Tsinghua. $^\dag$Yin Sun is with Department of
Electrical and Computer Engineering, the Ohio State University,
Columbus, OH 43210, USA (e-mail: sunyin02@gmail.com).
$^\ddag$Tsung-Hui~Chang is with Department of Electronic and
Computer Engineering, National Taiwan University of Science and
Technology, Taipei 106, Taiwan (E-mail: tsunghui.chang@ieee.org).
$^*$Chong-Yung~Chi is with the Institute of Communications
Engineering and the Department of Electrical Engineering, National
Tsinghua University, Hsinchu,
Taiwan 30013 (e-mail: cychi@ee.nthu.edu.tw).}
}

\markboth{ACCEPTED BY IEEE TRANSACTIONS ON WIRELESS COMMUNICATIONS, Mar. 2013}%
{ACCEPTED BY IEEE TRANSACTIONS ON WIRELESS COMMUNICATIONS, Mar. 2013}

\maketitle
{\begin{abstract}
Optimal power allocation for orthogonal frequency division multiplexing (OFDM) wiretap channels with Gaussian channel inputs has already been studied in some {previous} works from an information theoretical viewpoint. However, these results are not sufficient for practical system designs. One reason is that discrete channel inputs, such as quadrature amplitude modulation (QAM) signals, instead of Gaussian channel inputs, are {deployed in current practical wireless systems} to maintain moderate peak transmission power and receiver complexity. In this paper, we investigate the power allocation and artificial noise design for OFDM wiretap channels with discrete channel inputs. We first prove that the secrecy rate function for discrete channel inputs is nonconcave with respect to the transmission power. To resolve the corresponding nonconvex secrecy rate maximization problem, we develop a low-complexity power allocation algorithm, which yields a duality gap diminishing in the order of $O(1/\sqrt{N})$, where $N$ is the number of subcarriers of OFDM. We then show that independent frequency-domain artificial noise cannot improve the secrecy rate of single-antenna wiretap channels. Towards this end, we propose a novel \emph{time-domain artificial noise} design which exploits temporal degrees of freedom provided by the cyclic prefix of OFDM systems {to jam the eavesdropper and boost the secrecy rate even with a single antenna at the transmitter}. Numerical results are provided to illustrate the performance of the proposed design schemes.
\end{abstract}}
\begin{IEEEkeywords}
Artificial noise, wiretap OFDM, power allocation,
discrete channel inputs, secrecy rate.
\end{IEEEkeywords}

\section{Introduction}
{Security has become increasingly important for wireless networks
due to the proliferation of privacy-sensitive wireless services.
Traditionally, wireless information security is handled by
cryptographic protocols in media access control (MAC) and higher
layers \cite{Massey1988}. However, these techniques face severe
challenges due to the rapid developments of encryption breaking
algorithms and super-computers \cite{Nichols2004}. Recently, various
physical-layer techniques have been proposed to realize perfect
secrecy in wireless networks \cite{Liang2009,LiuR1010}.

The fundamentals of physical-layer security techniques were
laid in \cite{Wyner1975,Cheong1978,Csiszar1978}.
These works studied the maximum data rate for secrecy communications, i.e., the secrecy capacity, for a wiretap channel in which an eavesdropper (Eve) intends to wiretap the secrecy
communications from a transmitter (Alice) to a legitimate receiver
(Bob). Recently, various techniques, such as power allocation, beamforming and training schemes, have been developed to maximize the secrecy capacity
of wiretap channels, e.g., \cite{Li2006,Liang2008,Jorswieck2008,Li2007,Liu2009,Khisti2010,Khisti2010Nov,Goel2008,Liao2011,Chang2010,LiMa2011SP,LiMa2011,He-Yener2012}.
One effective technique is initiatively transmitting artificial noise to jam the eavesdropper if the transmitter is equipped with multiple antennas, e.g., \cite{Goel2008,Liao2011,Chang2010,LiMa2011SP,LiMa2011}. Besides, advanced techniques based on
secrecy-key agreement were also studied in
{\cite{Bloch2008,Lai2008,Wallace2010,Bloch-Barros2011,Lai2012}} to
improve the security of wireless networks.

A common assumption of these studies is that the transmitted signal
{has} a Gaussian distribution. However, Gaussian signals are hardly
used in practice due to its infinite peak power and its excessive
detection complexities. Instead, discrete inputs such as Phase Shift
Keying (PSK) and Quadrature Amplitude Modulation (QAM) (see Fig.
\ref{fig:SystemModel}(a)) are prevalent in practical digital
communication systems \cite{Wilson1996}. Furthermore, most existing
artificial noise designs rely heavily on the {spatial} degrees of freedom
provided by multiple transmit antennas, which are not available in
single-antenna wiretap channels.


In this paper, we consider an OFDM wiretap channel with discrete channel inputs, where each node is equipped with a single antenna {and Alice has perfect knowledge of the channel state information (CSI) for the wireless links to Bob and Eve. We intend to answer the following questions: %
What are the differences between the secrecy rate functions corresponding to Gaussian and discrete channel inputs? Do these differences introduce additional difficulty {in solving} the power allocation problem of the OFDM wiretap channel with discrete channel inputs? Can we make use of artificial noise to improve the secrecy rate of an OFDM wiretap channel when the transmitter is equipped with only one antenna? To address these questions, we first study the convexity of the secrecy rate function with discrete channel inputs, and then develop power allocation and time-domain artificial noise designs to maximize the secrecy rate of OFDM wiretap channels. The main contributions of this paper are summarized as follows:}
\begin{itemize}
\item
We prove that the secrecy rate of any discrete channel
inputs {with a finite number of possible values (or more generally with a finite entropy) is a nonconcave function with respect to the
transmit power (Proposition \ref{proposition:secrecy rate})}. This is in contrast to the case of Gaussian
channel inputs, where the associated secrecy rate is concave and the
optimal power allocation has a closed-form solution
\cite{Li2006,Liang2008,Jorswieck2008}.
\item
A {low-complexity} power allocation algorithm based on Lagrange dual
optimization is {then} proposed for discrete channel inputs. We show that the duality gap of the proposed algorithm
diminishes asymptotically in the order of $O(1/\sqrt{N})$ {as $N$ increases, where
$N$ is the number of subcarriers of OFDM (Proposition \ref{proposition:approximate optimal}).}

\item
{We show {that simply inserting independent artificial noise in the frequency domain} cannot improve the secrecy rate of single-antenna wiretap channels (Proposition \ref{property:frequency AN}).} To resolve this problem, we propose a \emph{time-domain artificial noise} design for the
considered single-antenna OFDM wiretap channel which exploits
\emph{temporal degrees of freedom provided by the cyclic prefix of
OFDM systems} to jam the eavesdropper. {{To the best of our knowledge}, this is the first time-domain artificial noise design for OFDM wiretap channels.}
\item
{Finally, we jointly optimize the subcarrier power allocation and the covariance matrix of the time-domain artificial noise to improve the
secrecy rate.} Successive convex approximation methods are
proposed to handle the joint design problem efficiently. 
Numerical results are presented to show that the proposed
artificial noise schemes can considerably boost the secrecy rate.
\end{itemize}

{{There are several related works published recently. For example, linear} precoding was studied for
multiple-input multiple-output (MIMO) wiretap channels with
discrete inputs \cite{Wu-Xiao2012,Bashar2011arxiv}, where the solution is  locally optimal.} {In \cite{Renna2012}, the OFDM wiretap channel was treated as a special instance of the MIMO wiretap channel and its achievable secrecy rates were studied under both Gaussian inputs and rectangular QAM constellations through asymptotic high/low SNR analysis and numerical evaluations. In contrast, we consider a broader discrete channel input setting along with more general analytical results in Propositions 1-3.} {Artificial noise design was studied in \cite{Li-Yates-Trappe2010} for single antenna wiretap channel with discrete inputs (without OFDM), by assuming an AWGN channel to the Bob and a fast fading channel to the Eve. Our work considers quasi-static fading channels to both Bob and Eve, and  respectively studies  frequency domain and time domain artificial noise designs.}

The remainder of this paper is organized as follows. In Section
\ref{sec:system model and secrecy rate}, we present the system model
and the power allocation problem, and then prove the nonconcavity of
the secrecy rate under discrete channel inputs. In Section
\ref{sec:power allocation}, a power allocation algorithm without
using artificial noise is presented for handling the secrecy rate
maximization problem. Section \ref{sec:AN} presents the time-domain
artificial noise design and two artificial noise aided power
allocation algorithms (one for discrete inputs and the other for
Gaussian inputs). Finally, some conclusions are drawn in Section
\ref{sec:conclu}.

{\bf Notation:} $\mathbb{C},\mathbb{C}^n$ and $\mathbb{C}^{m\times
n}$ denote the set of complex numbers, $n$-vectors, $m\times n$
matrices, respectively. Bold uppercase letters denote matrices and
bold lowercase letters denote column vectors. {$\mathbf{I}_N$
denotes an $N\times N$ identity matrix.}
 $\mathbf{A}\succeq \mathbf{0}$ denotes that the matrix
$\mathbf{A}$ is a positive semi-definite matrix.
$\mathrm{tr}(\mathbf{A})$ denotes the trace of matrix $\mathbf{A}$.
$\mathbf{p} \succeq \mathbf{0}$ means that each component of vector
$\mathbf{p}$ is nonnegative.
$\mathbf{x}\sim\mathcal{CN}(\mathbf{0},\mathbf{\Sigma})$ denotes
that $\mathbf{x}$ is a complex Gaussian random vector with zero mean
and covariance matrix $\mathbf{\Sigma}$. $\mathbb{E}[x]$ represents
the expectation of the random variable $x$, and $\mathbb{E}[x|y]$
denotes the conditional expectation of $x$ given $y$. Function
$\mathcal{H}(x)$ denotes the entropy of random variable $x$, and
$\mathcal{I}(x;y)$ denotes the mutual information between random
variables $x$ and $y$. $\mathrm{Diag}(x_1,x_2,...,x_N)$ denotes the
$N\times N$ diagonal matrix whose diagonal elements are $x_1,
x_2,...,x_N$. $[x]^+ \triangleq \max\{x,0\}$, and $f'(x_0)$ denotes
the first derivative of $f(x)$ at the point $x_0$.

\vspace{-0.4cm}
\section{System Model and Power Allocation Problem}\label{sec:system model and secrecy rate}
Consider a single-antenna OFDM wiretap channel with $N$ subcarriers.
Let $H_i\in \mathbb{C}$ and $G_i\in \mathbb{C}$ represent the
complex channel coefficients of the $i$th subcarrier from the
transmitter to the legitimate receiver and to the eavesdropper,
respectively (see Fig. \ref{fig:SystemModel}(b)). The received
signals of the legitimate receiver and the eavesdropper can be
expressed as
\begin{subequations}\label{equ:OFDM normal signal}
\begin{eqnarray}
&y_i=H_i\sqrt{p_i}s_i+w_i,~~~~~i=1,\dots,N,\\
&z_i=G_i\sqrt{p_i}s_i+v_i,~~~~~~i=1,\dots,N,
\end{eqnarray}
\end{subequations}
respectively, where $s_i\in \mathbb{C}$ is the normalized channel
input signal with zero mean and unity variance; $p_i\geq 0$ denotes
the power of the $i$th subcarrier; $w_i\in \mathbb{C}$ and $v_i\in
\mathbb{C}$ are independent zero-mean circularly symmetric complex
Gaussian noises with unity variance at the legitimate receiver and
the
eavesdropper, respectively. 
{The channel coefficient $H_i$ is known to the transmitter and the legitimate receiver, and $G_i$ is known to the transmitter and the eavesdropper, which are satisfied if all the three nodes are within the same wireless system.} The channel inputs
$\{s_i\}$ are statistically independent and
identically distributed (i.i.d.) Gaussian signals or some practical discrete signals, e.g., QAM (see
Fig. \ref{fig:SystemModel}(a)).

\begin{figure}[t]
\centering
\includegraphics[scale=0.44]{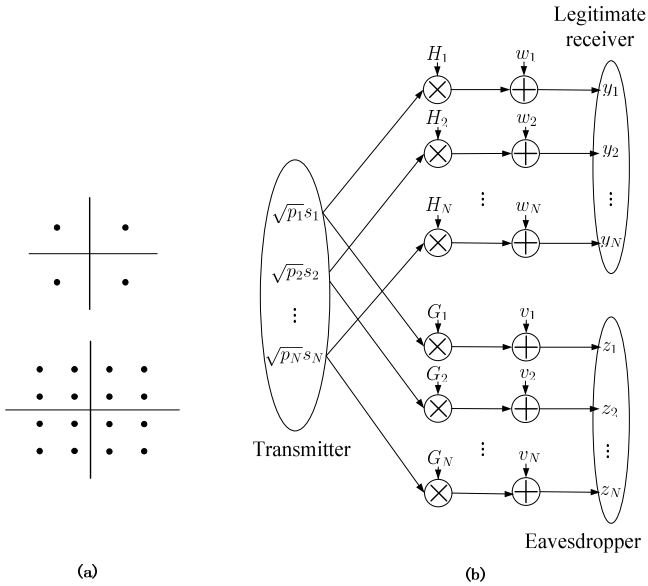}
\caption{(a) Finite discrete inputs: QPSK and 16QAM, (b) OFDM
wiretap channel.} \label{fig:SystemModel}
\end{figure}

The {secrecy rate} associated with the signal model in
(\ref{equ:OFDM normal signal}) can be shown to be \cite{Li2006}
\begin{equation}\label{Rs}
R_s(\textbf{p})\!=\frac{1}{N}\!\sum_{i=1}^{N}[\mathcal{I}(s_i;H_i\sqrt{p_i}s_i+w_i)\!\!-\!\!\mathcal{I}(s_i;G_i\sqrt{p_i}s_i+v_i)]^+,
\end{equation}
where $\textbf{p}=[p_1,p_2,...,p_N]^T$ contains all the subcarrier
power variables. Let us consider the following secrecy rate
maximization problem:
\begin{subequations}\label{problem1}
\begin{eqnarray}
R^\star = \underset{\mathbf{p}\succeq \mathbf{0}}{\max} &&\!\!\!\!\!
R_s(\textbf{p})
\\ 
~~~~~\textrm{s.t.}~ &&\!\!\!\!\! {\frac{1}{N}\sum\limits_{i=1}^N{p_i}\leq{P}},\label{eq2}
\end{eqnarray}
\end{subequations} where $P$ in \eqref{eq2} is the allowed maximal transmit power, and $R^\star$ denotes the maximum achievable secrecy rate.

\subsection{Gaussian Channel Inputs}
We first briefly review the case when $\{s_i\}$ are Gaussian
signals. In this case, the secrecy rate $R_s(\textbf{p})$ in
\eqref{Rs} can be easily reduced to
\begin{equation}\label{eq1}
R_{s}(\textbf{p})={\frac{1}{N}}
\sum_{i=1}^N[\;\log_2(1+{|H_i|^2p_i})-\log_2(1+{|G_i|^2p_i})\;]^+,
\end{equation}
which is known to be concave in $\mathbf{p}$. Therefore, the power
allocation problem \eqref{problem1} is a convex problem which can be
solved efficiently thanks to the following closed-form solution
\cite{Li2006,Liang2008,Jorswieck2008}:
\begin{eqnarray}\label{equ:gaussResult} p_i^\star\!=\!\!\begin{cases}\!\!
\frac{1}{2|H_i|^2|G_i|^2}\!\!\left[\!\sqrt{\!\!C_i^2\!\!-\!\frac{4|H_i|^2|G_i|^2(\lambda\!+\!|G_i|^2\!-\!|H_i|^2)}{\lambda}}\!-\!C_i\!\right],\\[0.1cm]
 ~~~~~~~~~~~~~~~~~~ ~~~~~~~~~~~~~{\textrm{if}~|H_i|^2\!-\!|G_i|^2\!>\!\lambda}\\[0.2cm]
 0,
 ~~~~~~~~~~~~~~~~~~~~~~~~~~~~~\textrm{otherwise},
\end{cases}
\end{eqnarray}
where $C_i=|H_i|^2+|G_i|^2$, and$\lambda\geq 0$ is the Lagrange multiplier associated with the
total power constraint \eqref{eq2}, and should be chosen such that
$\{p_i^\star\}$
in (\ref{equ:gaussResult}) satisfies ${ \frac{1}{N}}\sum_{i=1}^N{p_i^\star}={P}$. 
The reader can refer to \cite{Li2006,Liang2008,Jorswieck2008} for
more details.
\subsection{Discrete Channel Inputs}
In \cite{Rodrigues2010,Raghava2010,Bashar2011,Bashar2011arxiv,Li-Yates-Trappe2010},
it was observed from computer simulations that the secrecy rate
$R_s(\mathbf{p})$ in (\ref{eq1}) is nonconcave in $\mathbf{p}$ for
some discrete constellations, such as QPSK and 16QAM. One can infer
from Theorem 1 in \cite{Raghava2010} that $R_s(\mathbf{p})$ is
nonconcave for any uniformly distributed discrete inputs. {We now
prove that $R_s(\mathbf{p})$ is nonconcave for any discrete channel input distribution with a finite number of possible values:}
\begin{proposition}\label{proposition:secrecy rate}
Consider the following secrecy rate function:
\begin{equation}\label{equ:proposition1}
R(p)\triangleq
[\mathcal{I}(s;H\sqrt{p}s+w)-\mathcal{I}(s;G\sqrt{p}s+v)]^+,
\end{equation}
where $s$ has zero mean and unity variance, and $w$ and $v$ are
circularly symmetric complex Gaussian random variables with zero
mean and unity variance. Suppose that $s$ has a finite entropy,
i.e., $\mathcal{H}(s)<\infty$. Then, if $|H|>|G|$, $R(p)\geq0$ and
$R(p)$ is nonconcave in $p$; otherwise, $R(p)=0$ for all $p\geq0$.
\end{proposition}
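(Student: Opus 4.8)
The plan is to exploit the fact that discrete-input mutual information saturates at high SNR, which is precisely where the hypothesis $\mathcal{H}(s)<\infty$ enters. First I would note that each mutual information depends on $p$ only through the received SNR: writing $I(\gamma)\triangleq\mathcal{I}(s;\sqrt{\gamma}\,s+n)$ for a unit-variance circularly symmetric Gaussian noise $n$, we have $\mathcal{I}(s;H\sqrt{p}s+w)=I(|H|^2 p)$ and $\mathcal{I}(s;G\sqrt{p}s+v)=I(|G|^2 p)$. The scalar function $I(\gamma)$ is strictly increasing on $[0,\infty)$ (its derivative equals a positive constant times $\mathrm{mmse}(\gamma)>0$ by the I-MMSE relation, the MMSE being strictly positive for the non-degenerate unit-variance input $s$), with $I(0)=0$. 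Since $|H|>|G|$ forces $|H|^2 p\geq|G|^2 p$ for every $p\geq0$, monotonicity gives $I(|H|^2 p)\geq I(|G|^2 p)$, so the bracket in \eqref{equ:proposition1} is nonnegative, the operator $[\cdot]^+$ is inactive, and $R(p)=I(|H|^2 p)-I(|G|^2 p)\geq0$. This settles the nonnegativity claim as well as the trivial case $|H|\leq|G|$.

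The core is then to pin down the behavior of $R$ at the two ends of $[0,\infty)$. At the left end, $R(0)=0$, and by strict monotonicity of $I$ together with $|H|^2 p>|G|^2 p$ for $p>0$ we obtain $R(p)>0$ for every $p>0$. At the right end I would invoke the finite-entropy hypothesis: for a discrete input one has $0\leq I(\gamma)\leq\mathcal{H}(s)$ and $I(\gamma)\to\mathcal{H}(s)$ as $\gamma\to\infty$, because the channel becomes asymptotically noiseless and the equivocation $\mathcal{H}(s\,|\,\sqrt{\gamma}\,s+n)$ vanishes. Hence both $I(|H|^2 p)$ and $I(|G|^2 p)$ converge to the \emph{same} finite limit $\mathcal{H}(s)$, so $R(p)\to0$ as $p\to\infty$. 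Thus $R$ is a continuous function that starts at $0$, is strictly positive on $(0,\infty)$, and returns to $0$ at infinity.

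Finally I would obtain a contradiction from assuming $R$ concave, using the elementary fact that a concave function on $[0,\infty)$ bounded below must be nondecreasing. Indeed, if $R(a)>R(b)$ for some $a<b$, then slope monotonicity of a concave function yields, for all $p>b$, $\frac{R(p)-R(b)}{p-b}\leq\frac{R(b)-R(a)}{b-a}<0$, whence $R(p)\to-\infty$, contradicting $R\geq0$. Since $R$ is bounded below by $0$, concavity would make it nondecreasing; combined with $R(p_0)>0$ for some $p_0>0$ this forces $R(p)\geq R(p_0)>0$ for all $p\geq p_0$, contradicting $R(p)\to0$. Therefore $R$ cannot be concave.

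The main obstacle is the rigorous justification of the saturation $I(\gamma)\to\mathcal{H}(s)$ under the weaker hypothesis of merely finite entropy rather than finitely many mass points, since the support may then be countably infinite; here I would bound the equivocation by a Fano-type or dominated-convergence argument that uses $\mathcal{H}(s)<\infty$ to guarantee both the uniform bound $I(\gamma)\leq\mathcal{H}(s)$ and the limit. It is worth emphasizing that this finite limit is exactly what fails for Gaussian inputs (infinite entropy), where $I(\gamma)\sim\log_2\gamma$ is unbounded and the difference $R(p)$ instead tends to a positive constant while remaining concave, which explains why discreteness is essential to the nonconcavity.
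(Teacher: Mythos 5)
Your proposal is correct and follows essentially the same route as the paper: both rest on the I--MMSE relation to control the behavior of $R$ near $p=0$, on the saturation $\mathcal{I}(s;\sqrt{\gamma}s+n)\to\mathcal{H}(s)$ (which the paper obtains by citing Lemma~6 of the Guo--Shamai--Verd\'u paper, exactly the point you flag as the one needing care under merely finite entropy), and on a contradiction between concavity and the fact that $R$ is positive somewhere yet vanishes at $0$ and at infinity. The only cosmetic difference is the last step: the paper locates a point $\tilde{p}$ with $R'(\tilde{p})<0$ via the mean value theorem and uses the supporting-line inequality to force $R\to-\infty$, whereas you use the equivalent fact that a concave function bounded below on $[0,\infty)$ must be nondecreasing.
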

\begin{proof}
We first show that $R({0})=\lim_{p\rightarrow\infty}R(p)=0$.
When $p=0$, one can easily show that
$\mathcal{I}(s;w)=\mathcal{I}(s;v)=0$, because $s$ is statistically independent of
$w$ and $v$. Thus, $R({0})=0$.
Since $s$ has a finite entropy, it
must be a discrete random variable. According to Lemma 6 of
\cite{Guo05}, we have
\begin{equation}\label{equ1inprop1}
\lim_{p\rightarrow\infty}\mathcal{I}(s;H\sqrt{p}s+w)=\lim_{p\rightarrow\infty}\mathcal{I}(s;G\sqrt{p}s+v)=
\mathcal{H}(s)<\infty.
\end{equation}
Therefore, \begin{equation}\label{eq8}
\lim_{p\rightarrow\infty}R(p)=0.
\end{equation}

Next, let us show that, when $|H|> |G|$, there exists a
$\hat{p}\in(0,\infty)$ such that $R(\hat{p})>0$. According to
\cite{Guo05}, the gradient of $\mathcal{I}(s;H\sqrt{p}s+w)$ is given
by
\begin{equation}\label{equ:derivative}
\frac{\partial{\mathcal{I}(s;H\sqrt{p}s+w)}}{\partial{p}}=|H|^2\mathrm{mmse}\left(|H|^2p\right)\geq
0,
\end{equation}
where 
\begin{eqnarray}\label{equ:MMSE}
&& \mathrm{mmse}(|H|^2p)\triangleq
\mathbb{E}\left[|s-\mathbb{E}(s|H\sqrt{p}{s}+w)|^2\right]
\end{eqnarray}
is the minimum mean square error (MMSE) of estimating $s$ with the
received signal $y= H\sqrt{p} s +w$. When $p$ equals zero,
$\mathrm{mmse}(|H|^2p)$ in (\ref{equ:MMSE}) attains its maximum
value, i.e., $\mathrm{mmse}(0)=E[|s|^2]=1$. Thus, by (\ref{equ:proposition1}) and
(\ref{equ:derivative}), it can be seen that $
R'(0)={|H|^2-|G|^2}>0$, which implies that there must exist a
positive $\hat{p}$ such that $R(\hat{p})>0$.

Since $R(p)$ is continuous and differentiable \cite{Guo05}, by the
Lagrange's mean value theorem \cite{Jeffreys1988} and the fact that
$R(\hat{p})>\lim_{p\rightarrow\infty}R(p)=0$, it can be inferred
that there must exist a point $\tilde{p}\in [\hat{p},\infty)$
satisfying $R(\tilde p) < \infty$ and $R'(\tilde{p})<0.$

Now suppose that $R(p)$ is a concave function on $p\in[0,\infty)$.
Then the following inequality
\begin{equation} \label{eq7}
R(p)\leq R(\tilde{p}) + R'(\tilde{p})(p-\tilde{p})
\end{equation}
must hold for any $p \in [0,\infty)$. By letting
$p\rightarrow\infty$ in \eqref{eq7}, we obtain
$\lim_{p\rightarrow\infty}R(p)=-\infty$ since $R'(\tilde p)<0$, which however contradicts
the fact of $\lim_{p\rightarrow\infty}R(p)=0$. Therefore, the
concavity assumption for $R(p)$ is not true.

When $|H|\leq |G|$, $\mathcal{I}(s;H\sqrt{p}s+w)\leq
\mathcal{I}(s;G\sqrt{p}s+v)$ for any $p\geq 0$. Hence, in this case,
$R(p)=0$ for all $p\geq0$. Proposition 1 is thus proved.
\end{proof}

Proposition \ref{proposition:secrecy rate} implies that finding an
optimal power allocation for maximizing $R_s(\mathbf{p})$ is a
non-trivial design problem, as will be discussed shortly. {By
Proposition \ref{proposition:secrecy rate}, the secrecy rate
function is not a concave function for any channel input
distribution with a finite entropy. 
Basically two types of inputs have finite entropy, including
discrete distributions with a finite number of distinct values and
some of the discrete distributions with a countable number of
distinct values\footnote{A discrete distribution with a countable
number of values may either have a finite entropy or an infinite
entropy {\cite[page 48]{Cover2006}}.}. On the other hand,
distributions with an infinite entropy basically include all
continuous distributions and some of the discrete distributions with
a countable number of values, and this type of channel inputs may
have a concave secrecy rate function.}

Let us provide some numerical results to illustrate Proposition
\ref{proposition:secrecy rate}. Figure \ref{fig:I_input} shows the
mutual information $\mathcal{I}(s;H\sqrt{p}s+w)$,
$\mathcal{I}(s;G\sqrt{p}s+v)$, and the secrecy rate $R(p)$ for
Gaussian and QPSK channel inputs. One can observe from Fig.
\ref{fig:I_input}(a) and Fig. \ref{fig:I_input}(b) that
$\mathcal{I}(s;H\sqrt{p}s+w)$, $\mathcal{I}(s;G\sqrt{p}s+v)$ and
$R(p)$ are all concave when the inputs are Gaussian. From Fig.
\ref{fig:I_input}(c), $\mathcal{I}(s;H\sqrt{p}s+w)$ and
$\mathcal{I}(s;G\sqrt{p}s+v)$ are also concave for QPSK channel
inputs; however, the associated $R(p)$ is obviously not concave, as
shown in Fig. \ref{fig:I_input}(d).

\begin{figure}[t]
\centering
\includegraphics[scale=0.45]{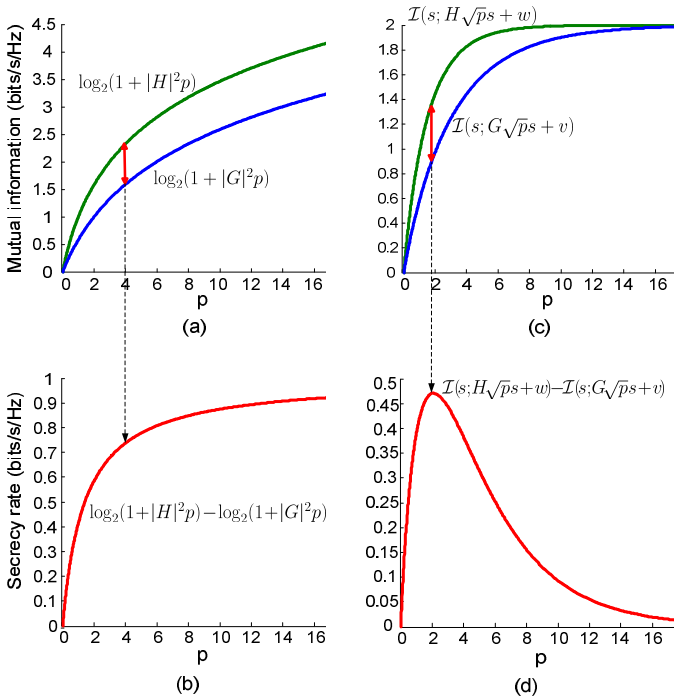}
\caption{(a) Mutual information and (b) secrecy rate for a Gaussian
channel input. (c) Mutual information and (d) secrecy rate for a
QPSK channel input. {Here $|H|=1$ and $|G|=0.5$}.}
\label{fig:I_input}
\end{figure}

\begin{figure}[t]
\centering
\includegraphics[scale=0.45]{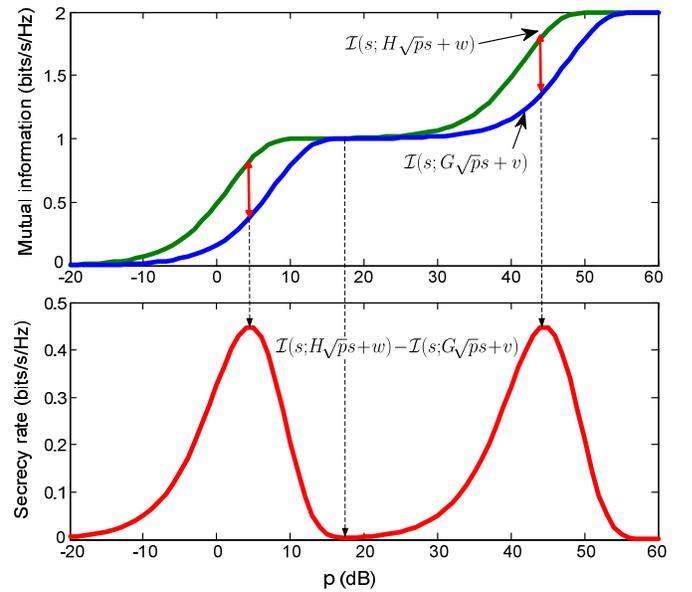}
\caption{Secrecy rate for a {non-standard} 4-PAM channel input with
probability mass function given by (\ref{dis:mess}). Here {$|H|=1$
and $|G|=0.5$}.} \label{fig:I_special distributed}
\end{figure}

We further claim that the secrecy rate $R(p)$ is not even an
\emph{quasi-concave} function for certain channel input distributions, by showing a counter example that the secrecy rate may have multiple peaks. Figure
\ref{fig:I_special distributed} further illustrates the secrecy rate
$R(p)$ in the logarithmic scale of $p$ for a {non-standard} 4-PAM channel
input $s$ \footnote{While the mutual information
$\mathcal{I}(s;H\sqrt{p}s+w)$ and $\mathcal{I}(s;G\sqrt{p}s+v)$ are
concave functions in $p$, they appear to be nonconcave in
logarithmic scale.}, where the probability mass function of $s$ is
given by
\begin{equation}\label{dis:mess}
P_{s}\sim
\begin{bmatrix}
&\!\!\!\!-51q & -50q & 50q & 51q~ \\
&\!\!\!\!0.25 & 0.25 &0.25 &0.25~
\end{bmatrix},
\end{equation}
where $q$ is a normalization parameter used to maintain the unity
variance of $s$. {It is interesting to see from Fig.
\ref{fig:I_special distributed} that the secrecy rate $R(p)$ has two
peaks. Hence, $R(p)$ is neither concave nor quasi-concave. When
$p<-10 \mathrm{dB}$, both Bob and Eve cannot identify the
constellation points. Since the constellation of $s$ in
(\ref{dis:mess}) has two groups (i.e., the group of $\{-51q,-50q\}$
and the group $\{51q,50q\}$), Bob and Eve start to be able to
identify the two groups as $p$ increases. Since $|H|>|G|$,
$I(s;H\sqrt{p}s+w)$ is larger than $I(s;G\sqrt{p}s+v)$, and so
$R(p)>0$. When $p\thickapprox 18 \mathrm{dB}$, both Bob and Eve can
identify the two groups of $s$, and  $I(s;H\sqrt{p}s+w)\thickapprox
I(s;G\sqrt{p}s+v) \thickapprox 1~\mathrm{bits/s/Hz}$, and so $R(p)$
decreases to nearly 0. When $p\geq 20 \mathrm{dB}$, Bob and Eve
start to identify each constellation point. Since Bob has better
channel quality, $R(p)$ increases with $p$ again.} This example also
shows that the conjecture in \cite{Raghava2010}, which claims the
secrecy rate under discrete finite constellations has a single
maximum, is not true for some discrete channel inputs.

\section{Proposed Power Allocation Algorithm}\label{sec:power allocation}
We now present {a computationally efficient} Lagrange
dual optimization algorithm to handle the nonconvex problem
\eqref{problem1}. {We will show that the
proposed algorithm yields a power allocation solution for which the
duality gap decreases with $N$ in the order of $O(1/\sqrt{N})${,
provided that the channel has a finite delay spread}. This mild condition is satisfied in practical OFDM systems.}

\subsection{Asymptotic Optimal Power Allocation by Dual
Optimization}\label{sec:power allocation-A} The Lagrangian of
problem (\ref{problem1}) is given by
\begin{eqnarray}\label{eq6}
L(\textbf{p},\lambda)=\!\!\!\!\!\!\!\!&&\!{\frac{1}{N}}\!\sum_{i=1}^N[\mathcal{I}(s_i;H_i\sqrt{p_i}s_i\!\!+\!\!w_i)\!-
\!\mathcal{I}(s_i;G_i\sqrt{p_i}s_i+v_i)]^+\!\nonumber\\
&&+\!\lambda\!\left(\!\!P-{\frac{1}{N}}\sum_{i=1}^Np_i\!\!\right),
\end{eqnarray}
where $\lambda\geq 0$ is the dual variable associated with the
constraint \eqref{eq2}. The dual problem is given by
\begin{equation}\label{pro:dual}
\begin{array}{ll}
D^\star =&\underset{\lambda\geq 0}\min~~d(\lambda),~~~~~~~
\end{array}
\end{equation}
where $D^\star$ denotes the optimal dual objective value, and
$d(\lambda)$ is the dual function given by
\begin{eqnarray}\label{pro:inner prob.}
d(\lambda)\triangleq \underset{\textbf{p}\succeq
0}\max{~~L(\textbf{p},\lambda)}.
\end{eqnarray}

The Lagrange dual method first solves problem \eqref{pro:inner
prob.} for a given dual variable $\lambda$. According to
\eqref{eq6}, problem \eqref{pro:inner prob.} can be decomposed into
$N$ separate subproblems, i.e.,
\begin{eqnarray}\label{equ:dlambda}
d(\lambda)={\frac{1}{N}}\sum_{i=1}^N B(p_i;\lambda,H_i,G_i) +
\lambda P,
\end{eqnarray}
where
\begin{eqnarray}\label{eq4}
\!\!\!\!\!\!\!\!\!\!\!\!\!\!&&\!\!\!\!\!\!\!\!\!\!\!B(p_i;\lambda,H_i,G_i)\!\nonumber\\
\!\!\!\!\!\!\!\!\!\!\!\!\!=&&\!\!\!\!\!\!\!\!\!\!\!\max_{p_i\geq
0}\left[\mathcal{I}(s_i;H_i\sqrt{p_i}s_i\!\!+\!\!w_i)\!\!-\!\mathcal{I}(s_i;G_i\sqrt{p_i}s_i\!\!+\!\!v_i)\right]^+\!\!\!-\!\lambda
p_i~~
\end{eqnarray}
is a one-dimensional power allocation subproblem for subcarrier $i$,
which can be efficiently solved by simple line search \cite{Boyd2003}. 

Notice that $d(\lambda)$ is a convex function \cite{Boyd2003} and
its subgradient can be easily seen, from {\eqref{equ:dlambda} and
\eqref{eq4}}, to be
$$\triangledown d(\lambda) = P -{\frac{1}{N}}\sum_{i=1}^{N} p_i^\star,$$ where $p_i^\star$ is the
optimal solution to \eqref{eq4}. Therefore, the dual variable
$\lambda$ can be efficiently updated using the bisection method
\cite{Boyd2003}. If $P -{\frac{1}{N}}\sum_{i=1}^{N} p_i^\star>0$,
then the subgradient $\triangledown d(\lambda)>0$, and thus we
decrease $\lambda$ in the bisection method for finding $D^\star$ given by (\ref{pro:dual}); otherwise we increase
$\lambda$. The resulting power allocation algorithm, called
Algorithm 1, for finding the desired $\mathbf{p}^\star$ for problem
\eqref{problem1} is summarized in Table \ref{tab1}. {It is important
to note from (\ref{eq8}) that the secrecy rate is not an increasing
function of the transmit power. Therefore, when the total power $P$
is large enough, the optimal total transmit power
${\frac{1}{N}}\sum_{i=1}^N{p_i^\star}$ by solving (\ref{eq4}) can be
strictly lower than the available transmit power $P$, and the
optimal dual variable is $\lambda^\star=0$ by the Karush-Kuhn-Tucker
(KKT) conditions \cite{Boyd2003}.
It is worthwhile to mention
that Algorithm 1 can also be applied to the case of Gaussian inputs,
where the per-subcarrier power allocation subproblem (\ref{eq4}) has
a closed-form solution exactly the same as (\ref{equ:gaussResult})
\cite{Li2006,Liang2008,Jorswieck2008}.


\begin{table} \caption{} \label{tab1} \centering
\begin{tabular}{l}
\hline Algorithm 1: Proposed power allocation scheme for discrete inputs.\\
\hline \textbf{Given:} ~~~~$\lambda_h\geq\lambda_l=0$, convergence tolerance $\varepsilon$\\
\textbf{repeat:}\\ ~~step 1: ~~~update
$\lambda=\frac{1}{2}(\lambda_l+\lambda_h)$\\~~step 2:
~~~obtain $\{p_i\}_{i=1}^N$ by solving problem (\ref{eq4})\\
~~step 3:~~~~if~ ${\frac{1}{N}}\sum\limits_{i=1}^{N}p_i<P$, then
update
$\lambda_h=\lambda$, else update $\lambda_l=\lambda$\\
\textbf{until:}~~~~~~~$\lambda_h-\lambda_l<\varepsilon$\\
\textbf{output:}~~~~output $\lambda^\star = \lambda_l$.\\
\hline
\end{tabular}
\end{table}

Since the primal problem (\ref{problem1}) is nonconvex, there is a
duality gap between the optimal $R^\star$ and optimal $D^\star$,
i.e., $D^\star-R^\star>0$, where $R^\star$ and $D^\star$ are defined
in \eqref{problem1} and \eqref{pro:dual}, respectively. Under a
Lipschitz continuity assumption on the channel coefficients
\cite{Luo2009}, it can be shown that the duality gap between
$D^\star$ and $R^\star$ diminishes with $N$ in the order of
$O({1}/{\sqrt{N}})$. Prior to presenting such result, some
quantities used in the discrete frequency domain and the
corresponding quantities in the continuous frequency domain need to be
reviewed. Let $H(f)$ and $G(f)$ denote the frequency domain channel
responses of the legitimate receiver and the
eavesdropper, respectively. 
Owing to uniform sampling in
Discrete Fourier transform (DFT) over the normalized frequency
interval $0\leq f\leq 1$, we have
\begin{equation}
H(i/N)=H_i,~ G(i/N)=G_i,~1\leq i \leq N.
\end{equation}

{In general, the time-domain channel has a finite delay spread
(i.e., it is nonzero only on a finite time interval), leading to the
fact that any order derivatives of its frequency response
exist and are bounded \cite[pp. 94-96]{Papoulis1977}. {In particular,} the first
derivatives of $H(f)$ and $G(f)$ exist and are bounded. Therefore,
there exist some $L_H,L_G>0$ such that the following Lipschitz
continuous conditions hold for all $f,f'\in[0,1]$:
\begin{eqnarray} \label{equ:hi_lipschitz}
\big|H(f)\!\!-\!\!H(f')\big|\!\!\leq \!L_H\! |f\!-\!f'|,~
\big|G(f)\!\!-\!\!G(f')\big|\!\!\leq\!L_G\!|f-f'|
\end{eqnarray}
We can show the following proposition:}

\begin{proposition}\label{proposition:approximate optimal}
{Suppose that the channel coefficients $H(f)$ and $G(f)$ are
Lipschitz continuous satisfying (\ref{equ:hi_lipschitz}).
Then
\begin{equation}\label{equ:pro1-result}
0\leq D^\star-R^\star\leq O\left(\frac{1}{\sqrt{N}}\right),
\end{equation}
where $R^\star$ and $D^\star$ are defined in problem
(\ref{problem1}) and problem (\ref{pro:dual}), respectively.}
\end{proposition}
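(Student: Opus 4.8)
The lower inequality $0\le D^\star-R^\star$ is just weak duality and needs no argument, so the entire task is the upper bound. My plan is to compare the discrete problem \eqref{problem1} against its continuous-frequency counterpart, following the technique of \cite{Luo2009}. Define the per-frequency secrecy rate $r(f,p)\triangleq[\mathcal{I}(s;H(f)\sqrt{p}s+w)-\mathcal{I}(s;G(f)\sqrt{p}s+v)]^+$, the continuous primal value $\bar{R}^\star\triangleq\max\{\int_0^1 r(f,p(f))\,df:\ p(\cdot)\ge0,\ \int_0^1 p(f)\,df\le P\}$, and the continuous dual value $\bar{D}^\star\triangleq\min_{\lambda\ge0}\bar{d}(\lambda)$ with $\bar{d}(\lambda)\triangleq\int_0^1\sup_{p\ge0}(r(f,p)-\lambda p)\,df+\lambda P$. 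Since $H(i/N)=H_i$ and $G(i/N)=G_i$, both the discrete objective $R_s(\mathbf{p})$ and the discrete dual $d(\lambda)$ are exactly the $N$-point Riemann sums of their continuous analogues, so the strategy is to sandwich $D^\star$ and $R^\star$ around the common continuous value and control each gap by the Lipschitz conditions \eqref{equ:hi_lipschitz}.

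The first ingredient is that the continuous problem has \emph{zero} duality gap, $\bar{R}^\star=\bar{D}^\star$, even though $r(f,\cdot)$ is nonconcave by Proposition \ref{proposition:secrecy rate}. This is the standard consequence of optimizing a separable integral functional over $L^1[0,1]$ subject to a single scalar budget: the range of the vector measure $p(\cdot)\mapsto(\int_0^1 p\,df,\ \int_0^1 r(f,p)\,df)$ is convex by Lyapunov's theorem on nonatomic measures, hence the perturbation function is concave and strong duality holds. I would invoke \cite{Luo2009} for this step.

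The second ingredient, the dual side, is comparatively easy. For the relevant $\lambda$ the inner maximizer of $r(f,p)-\lambda p$ is bounded, because $r(f,p)\le\mathcal{H}(s)<\infty$ forces the optimizer to stay in a fixed compact range; together with \eqref{equ:hi_lipschitz} this makes $\sup_{p\ge0}(r(f,p)-\lambda p)$ Lipschitz in $f$, so a Riemann-sum estimate gives $|d(\lambda)-\bar{d}(\lambda)|=O(1/N)$ and therefore $D^\star\le\bar{D}^\star+O(1/N)$.

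The third ingredient, the primal side, is where the real difficulty and the $O(1/\sqrt{N})$ rate live. I must exhibit a \emph{feasible} discrete power vector whose objective is at least $\bar{R}^\star-O(1/\sqrt{N})$. Pointwise sampling of a continuous optimizer $\bar{p}(\cdot)$ is inadmissible precisely because Proposition \ref{proposition:secrecy rate} shows $r(f,\cdot)$ is nonconcave, so $\bar{p}(\cdot)$ may be discontinuous (bang-bang) and neither the budget $\tfrac1N\sum_i p_i$ nor the objective is then controlled by a Riemann sum; no Jensen/averaging argument applies since $r$ is not concave in $p$. The hard part will therefore be to round $\bar{p}(\cdot)$ into a grid-feasible allocation with controlled loss: following \cite{Luo2009}, I would build per-interval powers from the dual-optimal per-tone solutions, use \eqref{equ:hi_lipschitz} to argue that neighboring subcarriers see nearly identical per-tone subproblems \eqref{eq4}, and perturb only a small fraction of tones to enforce $\tfrac1N\sum_i p_i\le P$, the balance between the number of perturbed tones and the per-tone rate loss being exactly what yields $O(1/\sqrt{N})$ rather than $O(1/N)$. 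Chaining the three ingredients with $\bar{R}^\star=\bar{D}^\star$ then gives $0\le D^\star-R^\star\le(\bar{D}^\star-\bar{R}^\star)+O(1/\sqrt{N})=O(1/\sqrt{N})$, as claimed.
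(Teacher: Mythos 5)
Your overall architecture is the same one the paper uses: everything is routed through the duality-gap machinery of \cite{Luo2009}. The paper simply invokes \cite[Theorem 2]{Luo2009} and devotes its entire proof to verifying that theorem's hypothesis; you instead unpack the theorem's internals (Lyapunov convexity of the continuous relaxation, Riemann-sum control of the dual, rounding of a possibly discontinuous continuous optimizer on the primal side). That expansion is reasonable and correctly identifies why the nonconcavity from Proposition \ref{proposition:secrecy rate} forces the $O(1/\sqrt{N})$ rate, but it is not where the paper's actual work lies.

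The genuine gap is that you treat as automatic exactly the step the paper proves: the joint Lipschitz continuity \eqref{key} of the per-frequency secrecy rate, $|R_{s}(f,p)-R_{s}(f',p')|\leq L_R(|f-f'|+|p-p'|)$. You assert that boundedness of the inner maximizer ``together with \eqref{equ:hi_lipschitz}'' makes $\sup_{p\geq0}(r(f,p)-\lambda p)$ Lipschitz in $f$, but \eqref{equ:hi_lipschitz} is Lipschitz continuity of the channel coefficients $H(\cdot),G(\cdot)$, not of the map $f\mapsto\mathcal{I}(s;H(f)\sqrt{p}s+w)$; with discrete inputs there is no closed form, and the transfer requires the I-MMSE relation \eqref{equ:derivative} of \cite{Guo05}. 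Concretely, the paper uses $\partial\mathcal{I}/\partial p=|H|^2\,\mathrm{mmse}(|H|^2p)\leq |H|^2\leq M_H^2$ to get Lipschitz continuity in $p$ (which your rounding step also needs, to bound the rate lost on the perturbed tones), and a companion derivative bound with respect to the channel gain to get $|r(f,p)-r(f',p)|\leq 2p(M_HL_H+M_GL_G)|f-f'|$ (which your Riemann-sum step and your ``neighboring tones see nearly identical subproblems \eqref{eq4}'' claim both need). Without these two bounds none of your three ingredients closes, so the proposal is a correct skeleton around a missing core. A smaller caveat worth noting: the resulting Lipschitz constant grows linearly in $p$, so both your argument and the paper's implicitly require the per-tone powers to be confined to a compact set before \cite{Luo2009} applies.
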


\begin{proof} In accordance with
\cite[Theorem 2]{Luo2009}, for proving \eqref{equ:pro1-result} it is
sufficient to show that there exists a constant $L_{R}>0$ such that
the difference between the secrecy rates on any two frequencies,
$R_{s}(f,p)$ and $R_{s}(f',p')$, is bounded, i.e.,
\begin{equation}\label{key}
|R_{s}(f,p)-R_{s}(f',p')|\leq L_{R}\left(|f-f'|+|p-p'|\right),
\end{equation} for any $f,f' \in [0,1]$ and $p,p'\geq 0$, namely,
that the secrecy rate $R_{s}(f,p)$ is Lipschitz continuous. The
proof of \eqref{key} is presented in Appendix I.
\end{proof}


\subsection{Numerical Results}\label{sec:NRofPA}
We now provide some numerical results to illustrate the efficacy of
the proposed power allocation scheme (Algorithm 1 in Table
\ref{tab1}). Suppose that the OFDM system has $N = 64$ subcarriers.
The length of the cyclic prefix (CP) is $N_{cp} = 16$ channel
samples. {Each channel realization is composed of 8 i.i.d. Rayleigh
fading paths with the maximum time delay of 7 samples and the
channel power normalized to unity.} The numerical results are
obtained by averaging over 300 channel realizations for each
signal-to-noise ratio (SNR) value, where $\mathrm{SNR}={P}$ (which
is the ratio of the available transmit power to AWGN power
($\mathbb{E}[|w_i|^2]=\mathbb{E}[|v_i|^2]=1$) per subcarrier). The
proposed power allocation algorithm is compared with two existing
algorithms. The first one is the equal power allocation scheme,
denoted by ``equal PA'', which allocates the total power uniformly
over all the subcarriers that satisfy $|H_i|>|G_i|$; the second one
is the power allocation scheme for Gaussian channel inputs, which is
given by (\ref{equ:gaussResult}), denoted by ``PA of
(\ref{equ:gaussResult})''.
\begin{figure}[t]
\centering
\includegraphics[scale=0.44]{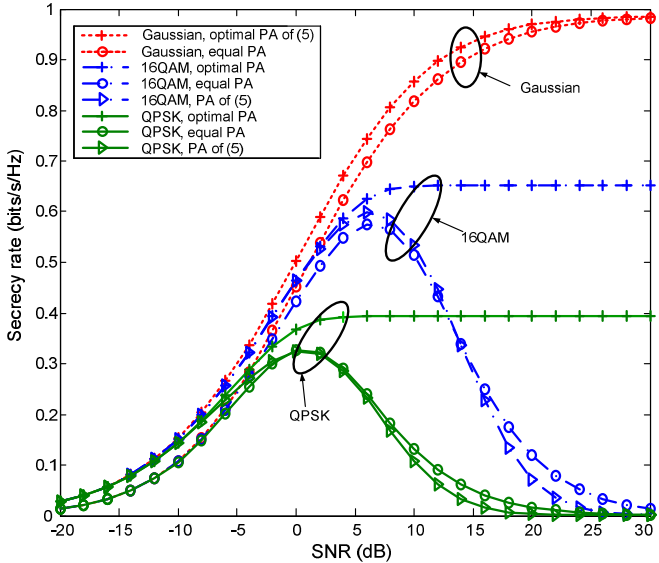}
\caption{Secrecy rates achieved by Algorithm 1 (in Table \ref{tab1})
denoted by ``$+$", equal power allocation scheme denoted by
``$\circ$", and power allocation scheme given by
(\ref{equ:gaussResult}) denoted by ``$\triangleright$".}
\label{Fig:comparison}
\end{figure}

\begin{figure}[t] \centering
\includegraphics[scale=0.45]{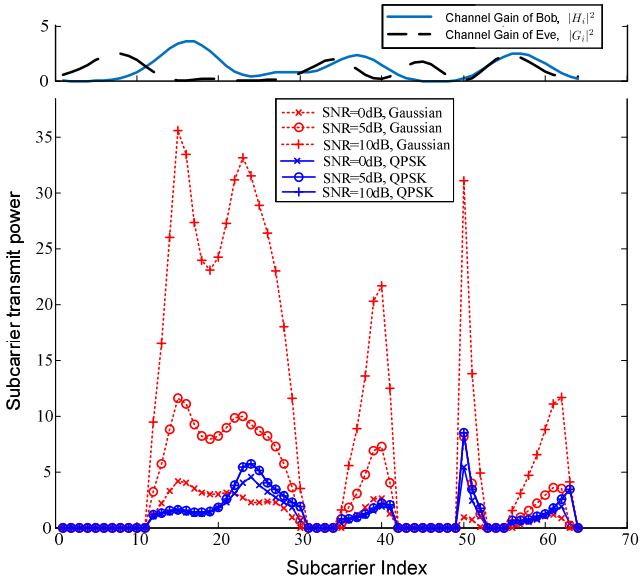}
\caption{Power allocation results obtained using Algorithm 1 for the
case of  Gaussian inputs and the case of QPSK inputs, where
$\mathrm{SNR} = 0\mathrm{dB}$; $5\mathrm{dB}$; $10\mathrm{dB}$ and
$N=64$. } \label{fig:power_to_sub}
\end{figure}

Figure \ref{Fig:comparison} illustrates the achievable secrecy rates
of the three power allocation schemes for different channel inputs.
It can be observed from this figure that Algorithm 1 (denoted by
``$+$'') achieves the highest secrecy rate (although it yields the
same optimal solution (\ref{equ:gaussResult}) for Gaussian channel
inputs). One can observe, from Fig. \ref{Fig:comparison}, that under
discrete inputs (QPSK and 16QAM), both the equal PA scheme (denoted
by ``{\Large $\circ$}'') and the PA scheme of
(\ref{equ:gaussResult}) (denoted by ``{\Large$\triangleright$}'') do
not perform well---their secrecy rates even drop to zero in the high
SNR regime. {The secrecy rate utilized by Algorithm 1 increases with
SNR, but saturates (rather than drops to zero) in the high SNR
regime, implying that there exists a power threshold for which the
secrecy rate reaches the maximum, and the power exceeding the
threshold will not improve the secrecy rate any more. The reason for
this is that the secrecy rate is not an increasing function of the
transmit power due to (\ref{eq8}).} {Similar observations were also reported in
\cite{Bashar2011arxiv,Rodrigues2010}.}

{Figure \ref{fig:power_to_sub} illustrates the power allocation
results obtained by Algorithm 1 for Gaussian and QPSK channel
inputs. Three SNR values are considered, i.e., $\mathrm{SNR} =
0\mathrm{dB}$, $5\mathrm{dB}$, $10\mathrm{dB}$. It can be observed,
from Fig. \ref{fig:power_to_sub}, that the transmit power is only
allocated to the subcarriers on which Bob has larger channel gain
than Eve. As previously mentioned, for discrete channel inputs, it
may not be true that the total power constraint (\ref{eq2}) is
active for the power allocation result obtained by Algorithm 1.
Actually, when $P$ is larger than a threshold, the power allocation
result remains unchanged. As shown in Fig. \ref{fig:power_to_sub},
when the channel inputs are QPSK, the power allocation result
associated with $\mathrm{SNR} = 5\mathrm{dB}$ almost coincides with
that associated with $\mathrm{SNR} = 10\mathrm{dB}$. This also
demonstrates that the total transmit power constraint (\ref{eq2}) is
inactive when $\mathrm{SNR} = 10\mathrm{dB}$ for QPSK inputs.}

\section{Joint Signal and Artificial Noise Design}\label{sec:AN}
{As presented in the previous section, the secrecy rate obtained by
Algorithm 1 reaches the maximum value and saturates in the high SNR
regime. {In other words, {the transmitter in the high SNR regime will not consume all the available transmit power}.} This motivates the use of the remaining
transmit power for generating artificial noise to jam the
eavesdropper, thus further increasing the secrecy rate. In
the subsequent subsections, the independent frequency-domain artificial noise
design is shown to be ineffective first. Then a novel
time-domain artificial noise scheme is proposed to exploit temporal degrees of freedom provided by the cyclic prefix of OFDM systems to jam the eavesdropper and boost the secrecy rate.}


\begin{figure*}
\centering
\scalebox{0.8}{\includegraphics{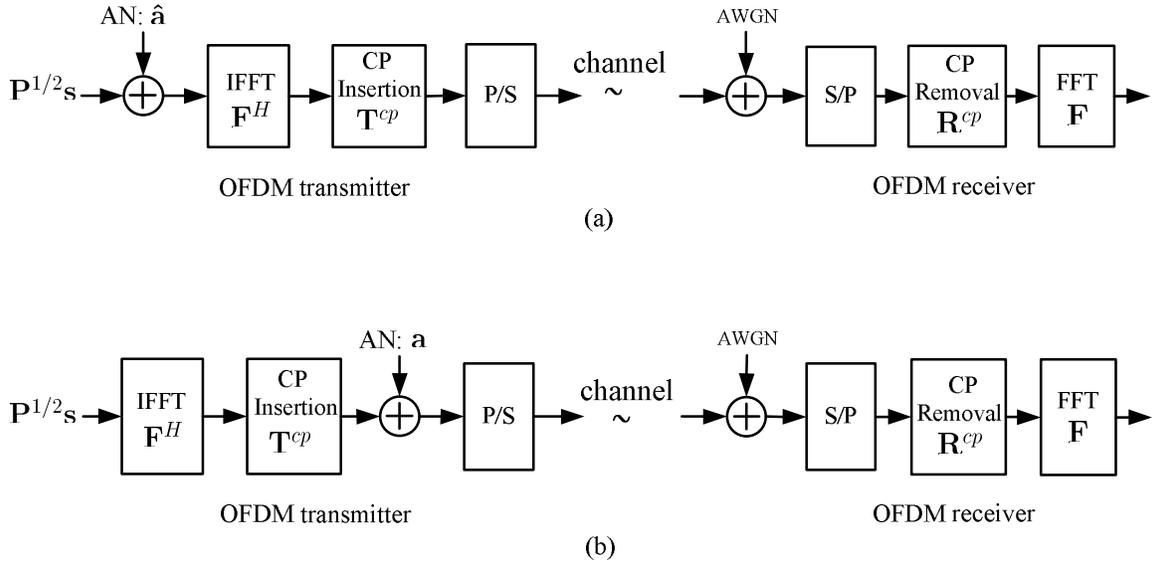}}\\
\caption{(a) System model of OFDM transceiver with artificial noise
(AN) added in the frequency domain. (b) System model of OFDM
transceiver with artificial noise added in the time domain.}
  \label{fig:OFDM}
\end{figure*}
\subsection{{Ineffectiveness} of {Independent} Frequency-Domain Artificial Noise Design}\label{sec:naifreAN}
Let us first consider a naive strategy by adding artificial
noise to all the subcarriers in the frequency domain, as illustrated
in Fig. \ref{fig:OFDM}(a), where $\mathbf{P} = \mathrm{Diag}({p_1},
{p_2},..., {p_N})$ and $\mathbf{s}=[s_1,s_2,...,s_N]^T$. The
corresponding received signals of the legitimate receiver and the
eavesdropper on the $i$th subcarrier can be expressed as
\begin{subequations}\label{equ:FD-AN}
    \begin{equation}
    y_i=H_i(\sqrt{p_i}s_i+\hat{a}_i)+w_i    ,~~~~~i=1,\dots,N,
    \end{equation}
    \begin{equation}
    z_i=G_i(\sqrt{p_i}s_i+\hat{a}_i)+v_i,~~~~~~i=1,\dots,N,
    \end{equation}
\end{subequations}
where $\hat{a}_i \in \mathbb{C}$ is the artificial noise term added
to the $i$th subcarrier. It is assumed that {$\hat{a}_i$ {are statistically independent} across the subcarriers and} $\hat{a}_i\sim
\mathcal{CN}(0,\sigma_{a,i}^2)$, where $\sigma_{a,i}^2\geq 0$ is the
artificial noise power for subcarrier $i$. According to
\cite{Li2006}, the secrecy rate for the signal model
(\ref{equ:FD-AN}) is given by
\begin{align}\label{Rs_ANinFdomain}
R_s^{AN}\nonumber
=&{\frac{1}{N}}\!\!\sum_{i=1}^{N}\big[\mathcal{I}\big(s_i;H_i(\sqrt{p_i}s_i+\hat{a}_i)+w_i\big)\!\!\nonumber\\
&~~~~~~~~-\!\!\mathcal{I}\big(s_i;G_i(\sqrt{p_i}s_i+\hat{a}_i)+v_i\big)\big]^+.
\end{align}
In order to maximize $R_s^{AN}$ in (\ref{Rs_ANinFdomain}), power
parameters $\{p_i\}$ and $\{\sigma_{a,i}^2\}$ should be jointly
optimized, which can be formulated as the following optimization
problem:
\begin{subequations}\label{pro:fre-an}
\begin{eqnarray}
\!\!\!\!\!\!\!\!\underset{\{p_i\}_{i=1}^N,\{\sigma_{a,i}^2\}_{i=1}^N}{\max}\!\!
&&\!\!\!\!\!\! R_s^{AN}\\
\textrm{s.t.}~~~~~ &&\!\!\!\!\!\! {\frac{1}{N}}\sum\limits_{i=1}^N(p_i+\sigma_{a,i}^2)\leq{P}~~~~~\label{equ1-in-proof}\\
&&\!\!\!\!\!\!\sigma_{a,i}^2\geq 0,~{p_i}\geq 0,~i =1,2,...,N.\label{equ2-in-proof}
\end{eqnarray}
\end{subequations}
The optimal solution for artificial noise power parameters of
\eqref{pro:fre-an} is given in the following proposition.
\begin{proposition}\label{property:frequency AN}
For {any given} distribution of $\{s_i\}$, the optimal artificial
noise power $\sigma_{a,i}^2$ to problem (\ref{pro:fre-an}) is
equal to zero for all $i$.
\end{proposition}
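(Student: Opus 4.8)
The plan is to collapse each per-subcarrier mutual information into a single univariate function of an \emph{effective} signal-to-noise ratio (SNR), and then run a power-repacking (exchange) argument that converts any artificial-noise power into a reduction of the eavesdropper's effective SNR. First I would note that on subcarrier $i$ Bob receives $y_i=H_i\sqrt{p_i}\,s_i+(H_i\hat a_i+w_i)$, whose aggregate noise $H_i\hat a_i+w_i$ is circularly symmetric complex Gaussian with variance $1+|H_i|^2\sigma_{a,i}^2$ and independent of $s_i$. Because this noise is circularly symmetric, the mutual information depends on $(H_i,p_i,\sigma_{a,i}^2)$ only through the real number $\gamma_i^B\triangleq|H_i|^2 p_i/(1+|H_i|^2\sigma_{a,i}^2)$, i.e. $\mathcal{I}(s_i;H_i(\sqrt{p_i}s_i+\hat a_i)+w_i)=\phi(\gamma_i^B)$, where $\phi(\gamma)\triangleq\mathcal{I}(s;\sqrt{\gamma}\,s+n)$ with $n\sim\mathcal{CN}(0,1)$ is the common MI-versus-SNR curve of the input distribution; the same holds for Eve with $\gamma_i^E\triangleq|G_i|^2 p_i/(1+|G_i|^2\sigma_{a,i}^2)$. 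By the I-MMSE identity already used in \eqref{equ:derivative}, $\phi'(\gamma)=\mathrm{mmse}(\gamma)\ge0$, so $\phi$ is nondecreasing, and the objective reads $R_s^{AN}=\frac1N\sum_i[\phi(\gamma_i^B)-\phi(\gamma_i^E)]^+$.

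Next I would take any feasible $\{(p_i,\sigma_{a,i}^2)\}$ and improve it tone by tone. On a subcarrier with $|H_i|\le|G_i|$ the bracket is zero: since $x\mapsto x/(1+x\sigma_{a,i}^2)$ is increasing, $\gamma_i^B\le\gamma_i^E$, so $\phi(\gamma_i^B)\le\phi(\gamma_i^E)$ (consistent with Proposition \ref{proposition:secrecy rate}), and I may set $p_i=\sigma_{a,i}^2=0$ with no loss. On a subcarrier with $|H_i|>|G_i|$ and $\sigma_{a,i}^2>0$, I would replace $(p_i,\sigma_{a,i}^2)$ by $(p_i',0)$ with $p_i'\triangleq p_i/(1+|H_i|^2\sigma_{a,i}^2)$. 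This keeps Bob's effective SNR unchanged, $|H_i|^2 p_i'=\gamma_i^B$, while Eve's becomes $|G_i|^2 p_i'=|G_i|^2 p_i/(1+|H_i|^2\sigma_{a,i}^2)<|G_i|^2 p_i/(1+|G_i|^2\sigma_{a,i}^2)=\gamma_i^E$, the strict inequality following precisely from $|H_i|>|G_i|$. Monotonicity of $\phi$ gives $\phi(\gamma_i^B)-\phi(|G_i|^2p_i')\ge\phi(\gamma_i^B)-\phi(\gamma_i^E)$, and since $[\cdot]^+$ is nondecreasing the per-subcarrier secrecy term does not decrease. Moreover $p_i'+0=p_i/(1+|H_i|^2\sigma_{a,i}^2)<p_i+\sigma_{a,i}^2$, so the new point still obeys the budget \eqref{equ1-in-proof}. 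Performing this on every tone yields a feasible solution with $\sigma_{a,i}^2=0$ for all $i$ and secrecy rate at least that of the original, so an optimal solution can always be taken with zero frequency-domain artificial noise.

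The hard part, and the only place the hypothesis $|H_i|>|G_i|$ enters, is establishing that the repacking strictly lowers Eve's effective SNR while holding Bob's fixed: because frequency-domain artificial noise passes through the \emph{same} scalar subchannel as the data on each tone, removing it and compensating so as to preserve Bob's SNR cannot raise (and in fact strictly lowers) Eve's SNR whenever Bob's subchannel is the stronger one. The remaining ingredients---that $\phi$ depends on the data only through the effective SNR, its monotonicity via I-MMSE, and the feasibility bookkeeping---are routine consequences of the circular symmetry of the Gaussian noise, so I anticipate no further obstacle.
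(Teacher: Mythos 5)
Your proof is correct, but it takes a genuinely different route from the paper's. The paper proves Proposition~\ref{property:frequency AN} (Appendix~II) by contradiction through the KKT necessary conditions: assuming $\sigma_{a,i}^{2\star}>0$ forces $p_i^\star>0$ and hence $\mu_i=\eta_i=0$, so the stationarity conditions require $\partial R_s^{AN}/\partial p_i^\star$ and $\partial R_s^{AN}/\partial \sigma_{a,i}^{2\star}$ to both equal $\lambda/N\geq 0$; writing these derivatives via the I-MMSE/gradient formula of \cite{Palomar2006} then shows that, when $|H_i|>|G_i|$, nonnegativity of the first derivative contradicts nonnegativity of the second. You instead give a constructive exchange argument: reduce each mutual information to a common curve $\phi(\gamma)$ of the effective SNR, and show that replacing $(p_i,\sigma_{a,i}^2)$ by $\bigl(p_i/(1+|H_i|^2\sigma_{a,i}^2),\,0\bigr)$ preserves Bob's SNR, strictly lowers Eve's SNR precisely because $|H_i|>|G_i|$, does not decrease the clipped secrecy term (by monotonicity of $\phi$ from I-MMSE), and strictly reduces the consumed power. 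Your approach is more elementary---it avoids the KKT machinery, the chain-rule gradient computation, and any constraint-qualification concerns---and it delivers slightly more: every feasible point with positive frequency-domain AN is dominated by a zero-AN point using strictly less power, which makes the ``noise rides the same scalar subchannel as the data'' intuition explicit. The paper's KKT route, in exchange, characterizes stationary points directly and dovetails with the Lagrange-dual algorithmic framework used throughout the paper. Both proofs establish the proposition in the same sense (zero AN is an optimal choice; on subcarriers where the secrecy term vanishes identically, positive AN is merely never helpful), so I see no gap in your argument.
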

{For the case that $\{s_i\}$ are Gaussian, Proposition \ref{property:frequency AN} can be simply proved by showing that, when $|H_i|^2>|G_i|^2$, the secrecy rate $R_s^{AN}$ is strictly decreasing in $\sigma_{a,i}^2$ for any $p_i\geq 0$. The proof for the case of arbitrarily distributed $\{s_i\}$ is more complicated, because $R_s^{AN}$ may be non-monotonic in $\sigma_{a,i}^2$. The proof details for general distributions of $\{s_i\}$ are presented in Appendix II.}
%

Proposition \ref{property:frequency AN} implies that adding
artificial noise in the frequency domain is not effective since it
only degrades the achievable secrecy rate. Next let us turn to the
design of artificial noise in the time domain which is able to boost
the secrecy rate. \vspace{-0.3cm}
\subsection{Proposed Time-domain Artificial Noise Design}\label{sec:T-AN}
The proposed wiretap OFDM system with a time-domain artificial noise
is
illustrated in Fig. \ref{fig:OFDM}(b). 
As a standard OFDM system, at first the frequency domain signal
$\mathbf{P}^{1/2} \mathbf{s}$ is transformed to the time domain by
inverse fast Fourier transform (IFFT) and then the cyclic prefix
(CP) is inserted. Then, an artificial noise term is added to this
time-domain signal before transmission. The receiver will discard
the CP and then transform the remaining signal to the frequency
domain by FFT. All of these operations can be expressed
by linear matrix operations \cite{Wang2000}.

\newcounter{TempEqCnt1}
\setcounter{TempEqCnt1}{\value{equation}}
\setcounter{equation}{31}
\begin{figure*}[b]
\hrulefill
\begin{equation}\label{equ:A-Lagrange}
\begin{array}{ll}
\!\!\!\!\!\!\!L(\mathbf{p},\mathbf{\Sigma}_d,\lambda)\!\!=\!\!\!\!\!\!&{\frac{1}{N}}\sum\limits_{i=1}^{N}\left[\mathcal{I}(s_i;\!
H_i\sqrt{p_i}s_i\!+\!w_i)\!-\!\mathcal{I}(s_i;\!\frac{G_i\sqrt{p_i}s_i}{\sqrt{\mathbf{b}_i^H\mathbf{\Sigma}_d\mathbf{b}_i\!+\!1}}
\!+\!\hat{v}_i)\right]^++\lambda\left(P-{\frac{1}{N}}\sum\limits_{i=1}^{N}{p_i}-{\frac{1}{N}}\mathrm{tr}(\mathbf{\Sigma}_d)\right)
\end{array}
\end{equation}
\end{figure*}
\setcounter{equation}{\value{TempEqCnt1}}

\newcounter{TempEqCnt}
\setcounter{TempEqCnt}{\value{equation}}
\setcounter{equation}{34}
\begin{figure*} [b]
\begin{equation}\label{pro:EA-dualfuc-pi}
p_i^\star=\arg \underset{p_i>0}\max~[\mathcal{I}(s_i;
H_i\!\sqrt{p_i}s_i\!+w_i\!)\!-\!\mathcal{I}(\!s_i;\!\!\frac{G_i\sqrt{p_i}s_i}{\sqrt{\mathbf{b}_i^H\mathbf{\Sigma}_d\mathbf{b}_i\!\!+\!\!1}}
+\hat{v}_i\!)]^+\!-\!\lambda {p_i},~i=1,2,...,N.
\end{equation}
\end{figure*}
\setcounter{equation}{\value{TempEqCnt}}

Let $\mathbf{F}$ and $\mathbf{F}^H$ denote  the $N\times N$ FFT and
IFFT matrices, and let $N_{cp}$ denote the length of CP. The
matrices for CP insertion and removal are represented by
$\mathbf{T}^{cp}=[\mathbf{\tilde{E}}^T_{N_{cp}\times N}
~\mathbf{I}_N]^T$ and $\mathbf{R}^{cp}=[\mathbf{0}_{N\times N_{cp}}
~\mathbf{I}_N]$, respectively, where
$\mathbf{\tilde{E}}_{N_{cp}\times N}$ contains the last $N_{cp}$
rows of the $N\times N$ identity matrix $\mathbf{I}_N$. Let $[h(0),
h(1),..., h(L)]$ and $[g(0),g(1),...,g(L)]$ represent the
time-domain channel impulse responses from the transmitter to the
legitimate receiver and the eavesdropper, respectively, where
$L<N_{cp}$ is the maximum delay spread. Then, following the system
block diagram in Fig. \ref{fig:OFDM}(b), the received signals of the
legitimate receiver and the eavesdropper can be expressed as
\cite{Wang2000}
\begin{subequations}\label{equ:S+AN}
\begin{equation}
\begin{array}{ll}
\mathbf{y}&=\mathbf{FR}^{cp}\mathbf{H}_0(\mathbf{T}^{cp}\mathbf{F}^H\mathbf{P}^{1/2}\mathbf{s}+\mathbf{a})+\mathbf{w}\\
&=\mathbf{H}\mathbf{P}^{1/2}\mathbf{s}+\mathbf{FR}^{cp}\mathbf{H}_0\mathbf{a}+\mathbf{w},
\end{array}
\end{equation}
\begin{equation}
\begin{array}{ll}
\mathbf{z}&=\mathbf{FR}^{cp}\mathbf{G}_0(\mathbf{T}^{cp}\mathbf{F}^H\mathbf{P}^{1/2}\mathbf{s}+\mathbf{a})+\mathbf{v}\\
&=\mathbf{G}\mathbf{P}^{1/2}\mathbf{s}+\mathbf{FR}^{cp}\mathbf{G}_0\mathbf{a}+\mathbf{v},
\end{array}
\end{equation}
\end{subequations}
where $\mathbf{a}\in \mathbb{C}^{N+N_{cp}}$ is a zero-mean complex
Gaussian random vector, $\mathbf{H}_0\in\mathbb{C}^{(N+N_{cp})\times(N+N_{cp})}$ is a {Toeplitz} channel matrix given by
\begin{equation}\nonumber \mathbf{H}_0=
\begin{bmatrix}
&h(0) &0 &0 &\cdots &0\\\vspace{-0.2cm} &\vdots &h(0) &0 &\cdots
&0\\\vspace{-0.2cm} &h(L) &\cdots &\ddots &\cdots
&\vdots\\\vspace{-0.1cm} &\vdots &\ddots &\cdots &\ddots &0\\
&0&\cdots &h(L) &\cdots &h(0)
\end{bmatrix},
\end{equation}
and so is the matrix $\mathbf{G}_0$, i.e., by replacing $h(l)$ with
$g(l), l=0,1,...,L$, in $\mathbf{H}_0$. Moreover, in
\eqref{equ:S+AN},
$\mathbf{H}=\mathbf{FR}^{cp}\mathbf{H}_0\mathbf{T}^{cp}\mathbf{F}^H=\mathrm{{Diag}}(
H_1,H_2,...,H_N)$ and
$\mathbf{G}=\mathbf{FR}^{cp}\mathbf{G}_0\mathbf{T}^{cp}\mathbf{F}^H
=\mathrm{{Diag}}(G_1,G_2,...,G_N)$, in which $H_i$ and $G_i$ are the
frequency responses of the legitimate receiver's channel and the
eavesdropper's channel, respectively, and
$\mathbf{{w}}\sim\mathcal{CN}(\mathbf{0},\mathbf{I}_N)$ and
$\mathbf{{v}}\sim\mathcal{CN}(\mathbf{0},\mathbf{I}_N)$ are the
corresponding noise vectors at the legitimate receiver and the
eavesdropper, respectively.

The received signal vectors given in (\ref{equ:S+AN}) can be
equivalently written as
\begin{subequations}\label{equ:s+AN-subcarrier}
\begin{equation}
y_i=H_i\sqrt{p_i}s_i+\mathbf{f}_i^T\mathbf{R}^{cp}\mathbf{H}_0\mathbf{a}+w_i,~~~i=1,2,...,N,
\end{equation}
\begin{equation}
z_i=G_i\sqrt{p_i}s_i+\mathbf{f}_i^T\mathbf{R}^{cp}\mathbf{G}_0\mathbf{a}+v_i,~~~i=1,2,...,N,
\end{equation}
\end{subequations}
respectively, where $\mathbf{f}_i^T\in \mathbb{C}^N$ is the $i$th
row of the FFT matrix $\mathbf{F}$. Note that this corresponds to a
simplified model of secure design in MIMO secrecy networks with a
cooperative jammer \cite{Fakoorian2011}.

In order not to interfere with the legitimate receiver, the
artificial noise is fully laid in the null space of the channel of
the legitimate receiver. Specifically, we let
\begin{equation}\label{set of ZF-designed AN}
\!\!\!\!\mathbf{a}=\mathbf{U}\mathbf{d},
\end{equation}
where $\mathbf{U}$ is a semi-unitary matrix whose column vectors
span the null space of $\mathbf{R}^{cp}\mathbf{H}_0$, i.e.,
\begin{equation}\label{equ: AN null codition}
\begin{array}{ll}
\mathbf{R}^{cp}\mathbf{H}_0\mathbf{U}=\mathbf{0},~~
\mathbf{U}^H\mathbf{U}=\mathbf{I}_{N_{cp}},
\end{array}
\end{equation}
and $\mathbf{d}\sim\mathcal{CN}(\mathbf{0},\mathbf{\Sigma}_d)$ in
which $\mathbf{\Sigma}_d$ is the covariance matrix of the artificial
noise vector $\mathbf{d}$ to be determined. {As the dimension of
$\mathbf{R}^{cp}\mathbf{H}_0$ is $N\!\times\! (N\!\!+\!\!N_{cp})$,
the dimension of $\mathbf{U}$ is $(N\!\!+\!\!N_{cp})\!\times\!
N_{cp}$ and the dimension of $\mathbf{d}$ is $N_{cp}$.}

Now, by (\ref{set of ZF-designed AN}) and (\ref{equ: AN null
codition}), the received signal in (\ref{equ:s+AN-subcarrier})
reduces to
\begin{subequations}\label{sigmod:sign-an-null}
\begin{eqnarray}
\!\!\!\!\!\!\!\!\!\!\!\!\!\!\!\!\!\!\!\!\!\!\!\!\!\!\!\!&&y_i=H_i\sqrt{p_i}s_i+w_i,~i=1,2,...,N,\\
\!\!\!\!\!\!\!\!\!\!\!\!\!\!\!\!\!\!\!\!\!\!\!\!\!\!\!\!&&z_i=G_i\sqrt{p_i}s_i+\mathbf{f}_i^T\mathbf{R}^{cp}\mathbf{G}_0\mathbf{U}\mathbf{d}+v_i,~i=1,2,...,N.
\end{eqnarray}
\end{subequations}
In general, the receiver detects the information in a
subcarrier-by-subcarrier manner, and the secrecy rate achieved by
(\ref{sigmod:sign-an-null}) is given by \cite{Li2006}
\begin{equation}
\small
{\frac{1}{N}}\sum\limits_{i=1}^{N}\left[\!\mathcal{I}(s_i;
H_i\sqrt{p_i}s_i+{w}_i)-\mathcal{I}(s_i;\frac{G_i\sqrt{p_i}s_i}{\sqrt{\mathbf{b}_i^H\mathbf{\Sigma}_{d}\mathbf{b}_i+
1}} +\hat{v}_i)\right]^+,
\end{equation}
where
$\mathbf{b}_i^H\triangleq\mathbf{f}_i^T\mathbf{R}^{cp}\mathbf{G}_0\mathbf{U}$,
and $\hat{v}_i\sim \mathcal {CN}(0,1)$. The joint power allocation
and artificial noise design problem can be formulated as
\begin{subequations}\label{pro:main}
\small
\begin{eqnarray}
\!\!\!\!\!\!\!\!\!\!\!\!\!\!\!\!\!\!\!\!\underset{\{\!p_i\!\}_{\!i=1}^N,\mathbf{\Sigma}_d}\max \!\!\!\!\!&&\!\!\!\!\!
{\frac{1}{N}}\sum\limits_{i=1}^{N}\!\!\left[\!\mathcal{I}(\!s_i;\!
H_i\sqrt{p_i}s_i\!\!+\!\!w_i\!)\!\!-\!\!\mathcal{I}(\!s_i;\!\frac{G_i\sqrt{p_i}s_i}{\sqrt{\mathbf{b}_i^H\mathbf{\Sigma}_{d}\mathbf{b}_i\!\!+\!
1}}\!\!+\!\hat{v}_i\!)\!\right]^+~~~\\
\textrm{s.t.}~~~&&\!\!\!\!{\frac{1}{N}}\left(\sum\limits_{i=1}^{N}{p_i}+\mathrm{tr}(\mathbf{\Sigma}_d)\right)\leq
P\label{const:ANmain_2}\\[0.15cm]
&&\!\!\!\!\mathbf{\Sigma}_d\succeq0,~~ p_i\geq0,~~ i=1,\ldots,N.~~~
\end{eqnarray}
\end{subequations}
\subsection{Lagrange Dual Optimization to Problem (\ref{pro:main}) through Successive Convex Approximation}\label{sec:JPA-AN-dual method}
Problem (\ref{pro:main}) is nonconvex and difficult to handle.
Again, as in Section \ref{sec:power allocation}, we consider
Lagrange dual optimization to solve problem (\ref{pro:main}). The
Lagrange of problem (\ref{pro:main}) is given by (\ref{equ:A-Lagrange}) at the bottom of the page,
where $\lambda$ is the Lagrange dual variable associated with
constraint (\ref{const:ANmain_2}). The associated dual function is
defined as
\setcounter{equation}{32}
\begin{equation}\label{equ: A-dual function}
d(\lambda)=\underset{\mathbf{p}\succeq0,\mathbf{\Sigma}_d\succeq
0}\max L(\mathbf{p},\mathbf{\Sigma}_d,\lambda).\vspace{-0.2cm}
\end{equation}
The bisection method used in Section \ref{sec:power allocation-A}
can also be applied to the following dual problem
\begin{equation}\label{pro: A-dual} \underset{\lambda\geq0}\min
~~{d(\lambda)}.
\end{equation}
However, solving problem (\ref{equ: A-dual function}) is still
challenging since the Lagrangian is not concave in
$(\mathbf{p},\mathbf{\Sigma}_d)$. We use the coordinate descent
method \cite{Bertsekas1999} to handle problem (\ref{equ: A-dual
function}), that tries to maximize the Lagrangian by updating
variable $\mathbf{p}$ and $\mathbf{\Sigma}_d$ in an alternating
fashion, to be presented next.


\subsubsection{Update of $\mathbf{p}$ with fixed $\mathbf{\Sigma}_d$}
With $\mathbf{\Sigma}_d$ fixed, the optimal $\mathbf{p}$ to problem
(\ref{equ: A-dual function}) can be obtained by solving the one-dimensional problems (\ref{pro:EA-dualfuc-pi}) given at the bottom of the page.

\subsubsection{Update of $\mathbf{\Sigma}_d$ with fixed $\mathbf{p}$}
Because the solution $\mathbf{p}^\star$ given by
(\ref{pro:EA-dualfuc-pi}) would yield nonnegative
$\{\mathcal{I}(s_i;\!
H_i\!\sqrt{p_i}s_i\!+w_i\!)\!-\!\mathcal{I}(\!s_i;\!\!\frac{G_i\sqrt{p_i}s_i}{\sqrt{\mathbf{b}_i^H\mathbf{\Sigma}_d\mathbf{b}_i\!+1}}
+\hat{v}_i\!), \forall i\}$, with $\mathbf{p}$ fixed, problem
(\ref{equ: A-dual function}) is equivalent to the following
minimization problem:
\setcounter{equation}{35}
\begin{equation}\label{pro: A-Sigma}
\underset{\mathbf{\Sigma}_d\succeq 0}\min~~ \sum\limits_{i=1}^{N}
\mathcal{I}(s_i;\frac{G_i}{\sqrt{\mathbf{b}_i^H\mathbf{\Sigma}_d\mathbf{b}_i+1}}\sqrt{p_i}s_i
+\hat{v}_i)+\lambda\mathrm{tr}(\mathbf{\Sigma}_d).
\end{equation}
For ease of presentation, let us define
\begin{equation}
\!\!\!T_i(\mathbf{\Sigma}_d)\!\triangleq\!
\mathcal{I}(s_i;\frac{G_i}{\sqrt{\mathbf{b}_i^H\mathbf{\Sigma}_d\mathbf{b}_i+1}}\sqrt{p_i}s_i
+\hat{v}_i), ~i=1,2,...,N.
\end{equation}
Next, we apply a successive convex approximation method which
guarantees to yield a stationary point of problem (\ref{pro:
A-Sigma}).

Consider the first-order approximation to $T_i(\mathbf{\Sigma}_d)$.
Let $t_i=(\mathbf{b}_i^H\mathbf{\Sigma}_d\mathbf{b}_i+1)^{-1}$. Then
$T_i(\mathbf{\Sigma}_d)$ becomes a function of $t_i$, i.e.,
$T_i(t_i)$, and its first derivative with respect to $t_i$ is given
by
\begin{equation}
T_i'(t_i)=|G_i|^2p_i\mathrm{mmse}({|G_i|^2p_it_i}).~~~~~~~~(\mathrm{by}~~
\eqref{equ:derivative})
\end{equation}
Then it is readily to see that the first-order approximation of
$T_i(t_i)$ at the point
$\bar{t}_i=(\mathbf{b}_i^H\mathbf{\bar\Sigma}_d\mathbf{b}_i+1)^{-1}$,
where $\mathbf{\bar{\Sigma}}_d$ is the one obtained in the previous
iteration, is given by
\begin{equation}\label{equ:t_appro}
\tilde{T}_i(t_i)=
T_i(\bar{t}_i)+\!|G_i|^2p_i[\mathrm{mmse}({|G_i|^2p_i\bar{t_i}})]
\left(t_i-\! \bar{t_i}\right).
\end{equation}
So we come up with the following first-order approximation to
problem (\ref{pro: A-Sigma}):
\begin{equation}\label{pro:appro}
\begin{array}{ll}
\!\!\!\!\underset{\mathbf{\Sigma}_d\succeq 0, \{t_i\}_{i=1}^N}\min ~&\!\!\!
\sum\limits_{i=1}^{N}\tilde{T}_i(t_i)
+\lambda\mathrm{tr}(\mathbf{\Sigma}_d)\\
~~~~\mathrm{s.t.}&\!\!\!t_i\!=\!(\mathbf{b}_i^H\mathbf{\Sigma}_d\mathbf{b}_i+1)^{-1},~~i=1,2,...,N.
\end{array}
\end{equation}

\begin{table} \caption{} \label{tab:fstord} \centering
\begin{tabular}{l}
\hline SCA Algorithm: Algorithm for solving problem (\ref{pro: A-Sigma}) \\
\hline \textbf{Given:} $\bar{\mathbf{\Sigma}}_d$\\
\textbf{repeat:}\\ ~~~step 1: ~~~~solve problem
(\ref{problem:fstord}) by CVX and obtain optimal
$\mathbf{\Sigma}_d$\\~~~step 2:
~~~~set $\bar{\mathbf{\Sigma}}_d=\mathbf{\Sigma}_d$ \\
\textbf{{until:}}~~a specified convergence criterion is satisfied.\\
\hline
\end{tabular}
\end{table}

Next, we show that the approximated problem (\ref{pro:appro}) can be
reformulated as a convex semi-definite program (SDP). Omitting all
the constant terms, problem (\ref{pro:appro}) can be equivalently
formulated as
\begin{equation}\label{pro:appro-0}
\begin{array}{ll}
\!\!\!\!\underset{\mathbf{\Sigma}_d\succeq 0,\{t_i\}_{i=1}^N}\min ~&\!\!
\sum\limits_{i=1}^{N}|G_i|^2p_i\left[\mathrm{mmse}(\frac{|G_i|^2p_i}{\mathbf{b}_i^H\mathbf{\bar{\Sigma}}_d\mathbf{b}_i+1})\right]
t_i+\lambda\mathrm{tr}(\mathbf{\Sigma}_d)\\
~~~~~~\mathrm{s.t.}&\!\!t_i=(\mathbf{b}_i^H\mathbf{\Sigma}_d\mathbf{b}_i+1)^{-1},~i=1,2,...,N.
\end{array}
\end{equation}
As MMSE is nonnegative \cite{Guo2011}, problem \eqref{pro:appro-0}
is equivalent to the following problem:
\begin{equation}\label{pro: A-Sigma-FO}
\begin{array}{ll}
\underset{\mathbf{\Sigma}_d\succeq 0,\{t_i\}_{i=1}^N}\min ~&
\sum\limits_{i=1}^{N}|G_i|^2p_i\left[\mathrm{mmse}(\frac{|G_i|^2p_i}{\mathbf{b}_i^H\mathbf{\bar{\Sigma}}_d\mathbf{b}_i+1})\right]
t_i+\lambda\mathrm{tr}(\mathbf{\Sigma}_d)\\
~~~~~\mathrm{s.t.}&t_i\geq(\mathbf{b}_i^H\mathbf{\Sigma}_d\mathbf{b}_i+1)^{-1},~~i=1,2,...,N.
\end{array}
\end{equation}
Finally, by applying Shur complement \cite{Boyd2003}, problem
(\ref{pro: A-Sigma-FO}) can be recast as
\begin{equation}\label{problem:fstord}
\begin{array}{ll}
\underset{\mathbf{\Sigma}_d\succeq 0,\{t_i\}_{i=1}^N}\min ~&
\sum\limits_{i=1}^{N}|G_i|^2p_i\left[\mathrm{mmse}(\frac{|G_i|^2p_i}{\mathbf{b}_i^H\mathbf{\bar{\Sigma}}_d\mathbf{b}_i+1})\right]
t_i+\lambda\mathrm{tr}(\mathbf{\Sigma}_d)\\[0.6cm]
~~~~~\mathrm{s.t.}& \begin{bmatrix}
&\!\!\!\!\!\mathbf{b}_i^H\mathbf{\Sigma}_d\mathbf{b}_i+1 &~1\\
&\!\!\!\!1 &~t_i
\end{bmatrix}\succeq \mathbf{0},~~~ i=1,2,...,N.
\end{array}
\end{equation}
Problem (\ref{problem:fstord}) is a standard SDP, which can be
solved by convex solvers such as CVX \cite{Grant2009} or SeDuMi
\cite{Sturm1999}.

We now show the convergence of the proposed successive convex
approximation algorithm. The first-order convex approximation of the
function $T_i(t_i)$, i.e., $\tilde{T}_i(t_i)$ in
\eqref{equ:t_appro}, satisfies the
following three relations: 
\begin{itemize}
\item
$\tilde{T}_i(t_i)\geq T_i(t_i), \forall t_i$, since $T_i(t_i)$ is
concave with respective to $t_i$ \cite{Guo2011};
\item
$\tilde{T}_i(\bar{t_i})= T_i(\bar{t_i})$;
\item
$\tilde{T}_i'(\bar{t_i})= T_i'(\bar{t_i})$.
\end{itemize}
Hence, according to \cite{Marks1978}, the proposed successive convex
approximation algorithm will converge to a point satisfying KKT
conditions of the original problem (\ref{pro: A-Sigma}), thereby
leading to a stationary local maximum of problem (\ref{pro:
A-Sigma}). This successive convex approximation method for solving
problem (\ref{pro: A-Sigma}), called SCA algorithm, is summarized in
Table \ref{tab:fstord}, and the artificial noise aided power
allocation algorithm, called Algorithm 2, for solving
(\ref{pro:main}) is summarized in Table \ref{tab:AE}.

\begin{table}\caption{} \label{tab:AE} \centering
\begin{tabular}{l}
\hline Algorithm 2: Proposed artificial noise aided power allocation scheme \\
~~~~~~~~~~~~~~~~for discrete inputs.\\
\hline \textbf{Given:} $\lambda_h\geq\lambda_l=0$, $\{p_i\}_{i=1}^N$\\
\textbf{{repeat:}}\\ ~~~step 1:~~~~update
$\lambda=\frac{1}{2}(\lambda_l+\lambda_h)$\\~~~step 2: ~~~{repeat:}\\
~~~~~~~~~~~~~~~~~~~obtain the optimal $\mathbf{\Sigma}_d$
using SCA Algorithm in Table \ref{tab:fstord}\\
~~~~~~~~~~~~~~~~~~~and then obtain $\{p_i\}_{i=1}^N$ by solving (\ref{pro:EA-dualfuc-pi})\\
~~~~~~~~~~~~~~~~{until:}~ $L(\mathbf{p},\mathbf{\Sigma}_d,\lambda)$ meets a specified convergence criterion \\
~~~step 3: ~~~~if~
${\frac{1}{N}}\left(\mathrm{tr}(\mathbf{\Sigma}_d)+\sum\limits_{i=1}^{N}p_i\right)<P$,
then update
$\lambda_h=\lambda$, \\
~~~~~~~~~~~~~~~else~update $\lambda_l=\lambda$ \\
\textbf{{until:}} ~~$\lambda$ meets a specified convergence criterion.\\
\hline
\end{tabular}
\end{table}

\subsection{Gaussian Channel Inputs}\label{sec:JPA-AN-Gaus}
The proposed artificial noise aided power allocation algorithm can
be significantly simplified when all $s_i$ are Gaussian. In this
case, the optimal solution $p_i^\star$ in (\ref{pro:EA-dualfuc-pi})
is simply given by \begin{eqnarray}\label{equ:GE_P}
p_i^\star\!=\!\!\begin{cases}\!\! \frac{1}
{2|{H}_i|^2|\hat{G_i}|^2}\!\left(\!\!
\sqrt{\!\hat{C}^2\!-\!
4|{H}_i|^2\!|\hat{G_i}|^2\frac{\lambda\!+|\hat{G_i}|^2\!-\!|{H}_i|^2}{\lambda}}\!-\!\hat{C}\!
\right),\\
~~~~~~~~~~~~~~~~~~~~~~~~~~~~~\textrm{if}~~|H_i|^2\!-\!|\hat{G_i}|^2>\lambda\\[0.2cm]
0,~~~~~~~~~~~~~~~~~~~~~~~~~~~\textrm{otherwise},
\end{cases}
\end{eqnarray}
where
$\hat{G_i}=\frac{G_i}{\sqrt{\mathbf{b}_i^H\mathbf{\Sigma}_d\mathbf{b}_i+1}}$ and $\hat{C}=|{H}_i|^2\!+\!|\hat{G_i}|^2$.
Problem (\ref{pro: A-Sigma}) for this case also reduces to a simple
convex optimization problem as follows:
\begin{equation}\label{pro:eve-noncoh-4}
\mathbf{\Sigma}_d^\star=\arg \underset{\mathbf{\Sigma}_d\succeq
0}\min
\sum\limits_{i=1}^{N}\log_2\left(1+\frac{|G_i|^2p_i}{\mathbf{b}_i^H\mathbf{\Sigma}_d\mathbf{b}_i+1}\right)+\lambda\mathrm{tr}(\mathbf{\Sigma}_d).
\end{equation}
The resulting algorithm, Algorithm 3, for Gaussian inputs is given
in Table \ref{tab:GE}.

\begin{table}\caption{} \label{tab:GE} \centering
\begin{tabular}{l}
\hline Algorithm 3: Proposed artificial noise aided power allocation scheme \\
~~~~~~~~~~~~~~~~for Gaussian inputs.\\
\hline \textbf{Given:} $\lambda_h\geq\lambda_l=0$, $\{p_i\}_{i=1}^N$\\
\textbf{{repeat:}}\\
~~~step 1:~~~  update $\lambda=\frac{1}{2}(\lambda_l+\lambda_h)$\\
~~~step 2:~~~ {repeat:}\\
 ~~~~~~~~~~~~~~~~~~~~~~~~by solving problem (\ref{pro:eve-noncoh-4}) using
CVX\\
 ~~~~~~~~~~~~~~~~~~~~~~~~and then obtain $\{p_i\}_{i=1}^N$ by (\ref{equ:GE_P})\\
~~~~~~~~~~~~~~~{until:} ~$L(\mathbf{p},\mathbf{\Sigma}_d,\lambda)$ meets a specified convergence criterion \\
~~~step 3:~~~ if~
${\frac{1}{N}}\left(\mathrm{tr}(\mathbf{\Sigma}_d)+\sum\limits_{i=1}^{N}p_i\right)<P$,
then update $\lambda_h=\lambda$,\\
~~~~~~~~~~~~~~~else~~~update $\lambda_l=\lambda$\\
\textbf{{until:}}~~$\lambda$ meets a specified convergence
criterion.\\ \hline
\end{tabular}
\end{table}

\subsection{Numerical Results}\label{sec:NRofAN}
\begin{figure}[t]
\centering
\includegraphics[scale=0.45]{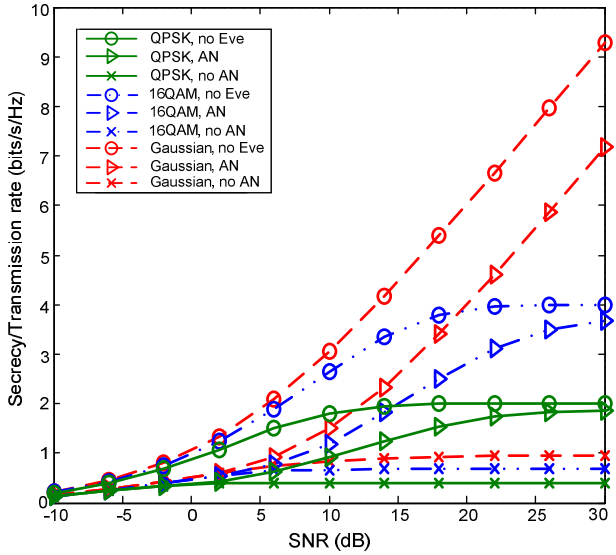}
\caption{Secrecy rates achieved by Algorithm 2 in Table \ref{tab:AE}
using artificial noise (AN) for discrete inputs (denoted by
``{\large$\triangleright$}''), Algorithm 3 in Table \ref{tab:GE}
using artificial noise for Gaussian inputs (also denoted by
``{\large$\triangleright$}''), Algorithm 1 in Table \ref{tab1}
without using artificial noise (denoted by ``$\times$''), and
transmission rate with no eavesdropper achieved by classical/mercury
water-filling strategy (denoted by
``{\large$\circ$}'').}\label{Fig:PA-AN}
\end{figure}
We now show some numerical results to compare the performance of
the proposed artificial noise aided power allocation algorithms
(Algorithm 2 in Table \ref{tab:AE} for discrete inputs and Algorithm
3 in Table \ref{tab:AE} for Gaussian inputs) with the power
allocation algorithm without using artificial noise (Algorithm 1 in
Table \ref{tab1}) and the classical/mercury water-filling strategy
with no eavesdropper (Eve) \cite{Lozano2006}. The OFDM system with
$N = 64$ subcarriers and CP length $N_{cp} = 16$ samples is
considered. The channel coefficients $\{H_i\}$ and $\{G_i\}$ are
generated by following the same procedure in Section
\ref{sec:NRofPA}.
The numerical results are averaged over 300
channel realizations with different values of $\mathrm{SNR}={P}$
(where artificial noise power is included in the total transmit
power $P$).
Figure \ref{Fig:PA-AN} shows the secrecy rate
performance of the three transmission schemes. One can observe from
this figure that Algorithm 2 and Algorithm 3 (denoted by ``
{\Large$\triangleright$}'') using artificial noise outperform
Algorithm 1 (denoted by ``$\times$'') without using artificial noise
for all SNR values, and the amount of performance improvement is
larger for higher SNR, justifying the effectiveness of the
artificial noise design. The performance of the scheme with no
eavesdropper (denoted by ``{\Large$\circ$}'') is expectantly better
than either of Algorithm 2 and Algorithm 3 using artificial noise.
However, the amount of performance difference for both finite
constellations of QPSK and 16QAM is smaller for higher SNR, while it
tends to saturate for higher SNR for the case of Gaussian inputs.
These numerical results have substantiated the efficacy of
Algorithm 2 and Algorithm 3 using artificial noise.

%
%

\section{Conclusions}\label{sec:conclu}
We have presented three efficient transmission schemes for OFDM
wiretap channels. The use of {time-domain} artificial
noise was also considered for jamming
the eavesdropper. These schemes were designed by maximizing secrecy
rate under the constraint of total transmit power, and the Lagrange
dual optimization method was used to resolve the associated nonconvex
optimization problems. The proposed schemes consist of Algorithm 1
without using artificial noise for discrete inputs, Algorithm 2
using artificial noise for discrete inputs, and Algorithm 3 using
artificial noise for Gaussian inputs. The duality gap for Algorithm
1 was shown to decrease with $N$ (the number of subcarriers of OFDM)
in the order $O(1/\sqrt{N}$). {Numerical} results were also provided
to demonstrate that Algorithm 1 significantly outperforms some
existing schemes though artificial noise is not used, and that
Algorithm 2 (for discrete inputs) and Algorithm 3 (for Gaussian
inputs) can further upgrade the secrecy rate performance compared to
Algorithm 1 due to the use of the time-domain artificial noise.

As a future research, it is of interest to consider the
scenario where the CSI of the eavesdropper is unknown or partially
known to the transmitter \cite{Garnaev2009}. Moreover, the
scenario that instead of throwing away the cyclic prefix in standard
OFDM systems, the eavesdropper can employ a more effective receiver,
is worthy of further investigation. In our work, the artificial
noise is only placed in the null space of the legitimate
receiver's channel. The artificial noise design by placing the
artificial noise in a suitable subspace depending on both the
legitimate receiver's and the eavesdropper's channels is {also} left as a
future study.

\section*{Appendix I: Proof of \eqref{key}}
In view of the fact that $\{s_i\}$ are i.i.d., and $\{w_i\}$ and
$\{v_i\}$ are also identically distributed, let us write the secrecy
rate on the frequency $f$ as
\begin{equation}\label{equ:gradients of R}
R_{s}(f,p)\triangleq
[\mathcal{I}(s;H(f)\sqrt{p}s+n)-\mathcal{I}(s;G(f)\sqrt{p}s+n)]^+,
\end{equation}
where $s$ has the same distribution as $s_i$ and $n$ has the same
distribution as $w_i$ or $v_i$. According to Proposition
\ref{proposition:secrecy rate}, we only need to consider the the
frequencies with $H(f)>G(f)$.

{Since $H(f)$ and $G(f)$ are Lipschitz continuous, they are also
uniformly bounded, such that
\begin{equation}\label{equ:bounded_h_g}
|H(f)|\leq M_H,~~|G(f)|\leq M_G, ~~\forall f\in[0,1]
\end{equation}
where $M_H>0$ and $M_G >0$ are constants.
By (\ref{equ:derivative}), (\ref{equ:MMSE}), \eqref{equ:bounded_h_g}
and $\mathrm{mmse}(|H(f)|^2p)\leq E[|s|^2]=1$ \cite{Guo05}, we
have that
\begin{subequations}\label{equ:derivative_R_p}
\begin{eqnarray}
\!\!\!\!\!\!\!\!&&\!\!\!\!\!\!\!\!\frac{\partial{R_{s}(f,p)}}{\partial{p}}\\
\!\!\!\!\!\!\!\!\!\!\!\!\!\!=\!\!\!\!&&\!\!\!\!\!\!\!\!|\!H\!(\!f)|^2\mathrm{mmse}(|\!H\!(\!f)|^2p)\!-\!|G(\!f)|^2\mathrm{mmse}(|G(\!f)|^2p)\\
\!\!\!\!\!\!\!\!\!\!\!\!\!\!\leq \!\!\!\!&&\!\!\!\!\!\!\!\!
|H(f)|^2\leq M_H^2.
\end{eqnarray}
\end{subequations}
According to the Lagrange's mean value theorem \cite{Jeffreys1988},
(\ref{equ:derivative_R_p}) implies that
\begin{equation}\label{equ:R_lipschis_p}
|R_{s}(f,p)-R_{s}(f,p')|\leq M_H^2|p-p'|,
\end{equation}
for any two $p,p'\geq 0$. On the other hand, using \eqref{equ:hi_lipschitz}, \eqref{equ:bounded_h_g}, $\mathrm{mmse}(|H(f)|^2p)\leq1$, and
\begin{equation}
\frac{\partial{\mathcal{I}(s;{H}\sqrt{p}s\!+\!n)}}{\partial{|H|^2p}}
=2p{|H|}\mathrm{mmse}({|H|}^2p),
\end{equation}
we can attain an upper bound of
$|R_{s}(f,p)-R_{s}(f',p)|$, i.e., 
{\small\begin{align}
&~~~~|R_{s}(f,p)-R_{s}(f',p)|\nonumber\\
&\leq
|\mathcal{I}(s;H(f)\sqrt{p}s+n)-\mathcal{I}(s;H(f')\sqrt{p}s+n)|
\notag \\
&~~~~+|\mathcal{I}(s;G(f)\sqrt{p}s+n)-\mathcal{I}(s;G(f')\sqrt{p}s+n)|\nonumber\\
& \leq 2p M_H\big|H(f)-H(f')\big|+2p M_G\big|G(f)-G(f')\big| \nonumber \\
& \leq \left(2pM_HL_H+2pM_G L_G\right){|f-f'|}. \label{equ:R_lipschis_i}
\end{align}}}
By (\ref{equ:R_lipschis_p}) and
(\ref{equ:R_lipschis_i}), we obtain that {\small
\begin{align}
&|R_{s}(f,p)-R_{s}(f',p')|\nonumber\\
=&|R_{s}(f,p)-R_{s}(f,p')+R_{s}(f,p')-R_{s}(f',p')| \notag\\
\leq & |R_{s}(f,p)-R_{s}(f,p')|+|R_{s}(f,p')-R_{s}(f',p')| \notag \\
\leq & M_H^2|p-p'|+(2pM_HL_H+2pM_G L_G)|f-f'| \notag \\
\leq &
\max\{M_H^2,2pM_HL_H+2pM_G L_G\}\bigg(|p-p'|+|f-f'|\bigg),
\end{align}}
which is the same as \eqref{key} in which $$L_R=
\max\{M_H^2,2pM_HL_H+2pM_G L_G\}.$$

\section*{Appendix II: Proof of Proposition \ref{property:frequency AN}}
The objective function of problem (\ref{pro:fre-an}) can be rewritten as follows:
{\small\begin{align}R_{s}^{AN}=&\frac{1}{N}\sum_{i=1}^{N}\big[\mathcal{I}\big(s_i;\frac{H_i\sqrt{p_i}}{\sqrt{|H_i|^2\sigma_{a,i}^2+1}}s_i\!+\!
\tilde{w}_i\big)\nonumber\\
&~~~~~~~~~~-\mathcal{I}\big(s_i;\frac{G_i\sqrt{p_i}}{\sqrt{|G_i|^2\sigma_{a,i}^2+1}}s_i\!+\!\tilde{v}_i\big)\big]^+
\end{align}}where $\tilde{w}_i=(H_i\hat a_i+w_i)/\sqrt{|H_i|^2\sigma_{a,i}^2+1}$
and $\tilde{v}_i=(G_i\hat a_i+v_i)/\sqrt{|G_i|^2\sigma_{a,i}^2+1}$
are Gaussian with zero mean and unity variance. The Lagrangian of
problem (\ref{pro:fre-an}) is given by
{\small$$R_{s}^{AN}+\lambda\left(P-{\frac{1}{N}}\sum\limits_{i=1}^Np_i-{\frac{1}{N}}\sum\limits_{i=1}^N\sigma_{a,i}^2\right)+\sum\limits_{i=1}^N\mu_i
\sigma_{a,i}^2+\sum\limits_{i=1}^N\eta_i p_i~,$$}where
$\lambda\geq0$ and $\{\mu_i\geq0, \eta_i\geq0\}$ are the Lagrange
dual variables associated with the constraints (\ref{equ1-in-proof})
and (\ref{equ2-in-proof}), respectively. Then the optimal solutions
of problem (\ref{pro:fre-an}), denoted by $\{p_i^\star\}$ and
$\{\sigma_a^{2\star}\}$, must satisfy the following KKT necessary conditions
\cite{Bertsekas1999}:
\begin{subequations}
\begin{align}
&\frac{\partial{R_s^{AN}}}{\partial{p_i^\star}}-{\frac{1}{N}}\lambda+\eta_i=0,~~~ i=1,2,...,N,\label{kkt1}\\
&\frac{\partial{R_s^{AN}}}{\partial{\sigma_{a,i}^{2\star}}}-{\frac{1}{N}}\lambda+\mu_i=0,~~~i=1,2,...,N,\label{kkt2}\\
&\lambda\left(P-{\frac{1}{N}}\sum\limits_{i=1}^Np_i^\star-{\frac{1}{N}}\sum\limits_{i=1}^N\sigma_{a,i}^{2\star}\right)=0,\label{kkt5}\\
&\mu_i \sigma_{a,i}^{2\star}=0,~~\eta_i p_i^\star=0,~~~i =1,2,...,N,\label{kkt6}\\
&\lambda\geq 0,~\mu_i\geq 0,~\eta_i\geq 0,~i\!=\! 1,2,...,N,\label{kkt8}\\
&\sigma_{a,i}^{2\star}\geq 0,~{p_i^\star}\geq 0.
\end{align}
\end{subequations}

If $|H_i|\leq |G_i|$ for some $i$, then
$\frac{|H_i|^2{p_i}}{|H_i|^2\sigma_{a,i}^2+1} \leq \frac{|G_i|^2
p_i}{|G_i|^2\sigma_{a,i}^2+1}$, and thus
{\small\begin{align}\label{temp}\!\!\!\!\!\!\left[\!\mathcal{I}\!\!\left(\!\!s_i;\!\!\frac{H_i\sqrt{p_i}}{\sqrt{|H_i|^2\sigma_{a,i}^2+1}}s_i\!+\!
\tilde{w}_i\!\!\right)\!\!-\!\!\mathcal{I}\!\!\left(\!\!s_i;\!\!\frac{G_i\sqrt{p_i}}{\sqrt{|G_i|^2
\sigma_{a,i}^2+1}}s_i\!+\!\tilde{v}_i\!\right)\!\!\right]^+\!\!\!\!\!\!\!=\!0
\end{align}}regardless of the values of $p_i^\star$ and
$\sigma_{a,i}^{2\star}$. Therefore, we can assume that $|H_i|>
|G_i|>0$ for all $i=1,\ldots,N$, without loss of
generality\footnote{As $|G_i|=0$ means no eavesdropper, there is no
need to use artificial noise, i.e., $\sigma_{a,i}^{2\star}=0$.}.

{We use contradiction to prove the result.} Suppose that
\begin{align}\label{eq:3}
\sigma_{a,i}^{2\star}>0
\end{align} for some $i$. Then it is
necessary that $p_i^\star>0$; otherwise we end up with \eqref{temp}
again, and thus $\sigma_{a,i}^{2\star}=0$ is also a feasible
solution. With $\sigma_{a,i}^{2\star}>0$ and $p_i^\star>0$, we have
$\mu_i=0$ and $\eta_i=0$ by (\ref{kkt6}). By (\ref{kkt1}),
(\ref{kkt2}) and (\ref{kkt8}), we further obtain
\begin{align}\label{kkt-derivative}
&\frac{\partial{R_s^{AN}}}{\partial{p_i^\star}}=\frac{\partial{R_s^{AN}}}{\partial{\sigma_{a,i}^{2\star}}}={\frac{1}{N}}\lambda\geq 0.
\end{align}
By \cite[Equation (22)]{Palomar2006} and by using the chain rule,
one can show that
{\begin{align} \!\!\!&\frac{\partial{R_s^{AN}}}{\partial{p_i^\star}}\!=\!{\frac{1}{N}}\left[\frac{|H_i|^2\mathcal{E}_{b,i}(p_i^\star,\sigma_{a,i}^{2\star})}{|H_i|^2\sigma_{a,i}^{2\star}+1}
\!-\!\frac{|G_i|^2\mathcal{E}_{e,i}(p_i^\star,\sigma_{a,i}^{2\star})}{|G_i|^2\sigma_{a,i}^{2\star}+1}\right]\!\geq\!  0,
\label{kkt-deri-sig}\\
\!\!\!&\frac{\partial{R_s^{AN}}}{\partial{\sigma_{a,i}^{2\star}}}\!\!=\!\!{\frac{1}{N}}\!\left[\!\frac{|G_i|^4p_i^\star\mathcal{E}_{e,i}(p_i^\star,\sigma_{a,i}^{2\star})}{(|G_i|^2\sigma_{a,i}^{2\star}+1)^2}\!-\!\frac{|H_i|^4p_i^\star\mathcal{E}_{b,i}(p_i^\star,\!\sigma_{a,i}^{2\star})}{(|H_i|^2\sigma_{a,i}^{2\star}\!+\!1)^2}\!\right]
\!\!\geq\!\! 0,\label{kkt-deri-p}
\end{align}}
where $\mathcal{E}_{b,i}(p_i^\star,\sigma_{a,i}^{2\star})$ and
$\mathcal{E}_{e,i}(p_i^\star,\sigma_{a,i}^{2\star})$, similar to
(\ref{equ:MMSE}), are given by
\begin{equation}\nonumber
\mathcal{E}_{b,i}(p_i^\star,\sigma_{a,i}^{2\star})\!\!=\!\!E\!\!\left[\!\big|s_i\!-\!E\big[s_i|H_i\sqrt{p_i^\star}s_i+H_i\hat{a}_i+w_i\big]\big|^2\right]\!\geq\!
0,
\end{equation}
\begin{equation}\nonumber
\mathcal{E}_{e,i}(p_i^\star,\sigma_{a,i}^{2\star})\!\!=\!\!E\!\left[\big|s_i\!-\!E\big[s_i|G_i\sqrt{p_i^\star}s_i+G_i\hat{a}_i+v_i\big]\big|^2\!\right]\!\geq
0.
\end{equation}

Combining {$\frac{|H_i|^2p_i^\star}{|H_i|^2\sigma_{a,i}^{2\star}+1}>
\frac{|G_i|^2p_i^\star}{|G_i|^2\sigma_{a,i}^{2\star}+1}>0$} (due to
$|H_i|>|G_i|$) and \eqref{kkt-deri-sig} yields
\begin{align}
&\frac{|H_i|^2p_i^\star}{(|H_i|^2\sigma_{a,i}^{2\star}+1)}\frac{|H_i|^2}{(|H_i|^2\sigma_{a,i}^{2\star}+1)}\mathcal{E}_{b,i}(p_i^\star,\sigma_{a,i}^{2\star})\nonumber\\
>&
\frac{|G_i|^2p_i^\star}{(|G_i|^2\sigma_{a,i}^{2\star}+1)}\frac{|G_i|^2}{(|G_i|^2\sigma_{a,i}^{2\star}+1)}\mathcal{E}_{e,i}(p_i^\star,\sigma_{a,i}^{2\star}),
\end{align}
which, however, contradicts with \eqref{kkt-deri-p}. Therefore, \eqref{eq:3} is not true and
$\sigma_{a,i}^{2\star}=0$ for all $i$. By this, the asserted statement is proved.

\section*{Acknowledgment}\nonumber
{The authors would like to thank Fei He, Nicola Laurenti, Wei-Chiang Li, Kun-Yu Wang,
and Shidong Zhou for valuable discussions to improve the quality of this paper.}

\bibliographystyle{IEEEtran}
\linespread{1}\selectfont
\bibliography{Haohao}

\end{document}